\newcommand{\ra}[1]{\renewcommand{\arraystretch}{#1}}
\newtheorem{theorem}{Theorem}
\newtheorem{lemma}[theorem]{Lemma}
\newtheorem{proposition}[theorem]{Proposition}
\newtheorem{definition}[theorem]{Definition}
\theoremstyle{definition}
\newtheorem{example}[theorem]{Example}
\newtheorem{remark}[theorem]{Remark}
\numberwithin{theorem}{section}
\numberwithin{equation}{section}
\numberwithin{table}{section}
\newcommand{\cC}{\mathcal{C}}
\newcommand{\cL}{\mathcal{L}}
\newcommand{\cP}{\mathcal{P}}
\newcommand{\cQ}{\mathcal{Q}}
\newcommand{\de}{\delta}
\newcommand{\la}{\lambda}
\newcommand{\ol}{\overline}
\newcommand{\lf}{\lfloor}
\newcommand{\rf}{\rfloor}
\newcommand{\Rad}{\text{Rad}\,}
\newcommand{\rank}{\text{rank}}
\newcommand{\wt}{\text{wt}}
\newcommand{\Tr}{\text{Tr}}
\newcommand{\F}{\mathbb{F}}
\newcommand{\Fq}{\mathbb{F}_q}
\newcommand{\Ft}{\F_2}
\newcommand{\RM}{\text{RM}}
\newcommand{\HRM}{\text{HRM}}
\newcommand{\PRM}{\text{PRM}}
\long\def\symbolfootnote[#1]#2{\begingroup%
\def\thefootnote{\fnsymbol{footnote}}\footnote[#1]{#2}\endgroup}
\begin{document}

\title{On the weight distribution of second order Reed-Muller codes and their relatives}
\author{Shuxing Li}
\date{}

\maketitle

\symbolfootnote[0]{
S.~Li is with Faculty of Mathematics, Otto von Guericke University Magdeburg, 39106 Magdeburg, Germany (e-mail: shuxing.li@ovgu.de).
}

\begin{abstract}

The weight distribution of second order $q$-ary Reed-Muller codes have been determined by Sloane and Berlekamp (IEEE Trans. Inform. Theory, vol. IT-16, 1970) for $q=2$ and by McEliece (JPL Space Programs Summary, vol. 3, 1969) for general prime power $q$. Unfortunately, there were some mistakes in the computation of the latter one. This paper aims to provide a precise account for the weight distribution of second order $q$-ary Reed-Muller codes. In addition, the weight distributions of second order $q$-ary homogeneous Reed-Muller codes and second order $q$-ary projective Reed-Muller codes are also determined.

\smallskip
\noindent \textbf{Keywords.} Reed-Muller codes, quadratic forms, weight distribution

\noindent {{\bf Mathematics Subject Classification\/}: 94B05, 11E04.}
\end{abstract}

\section{Introduction}

Let $q$ be a prime power. Due to their elegant algebraic properties and connections to finite geometry, $q$-ary Reed-Muller codes are long-standing research objects in coding theory, see \cite[Chapters 13,14,15]{MS} for $q=2$ and \cite[Section 5]{AK} for general $q$. Moreover, second order $q$-ary Reed-Muller codes are of particular interest, since they contain some famous subcodes such as Kerdock codes \cite[Chapter 15, Section 5]{MS} and the well-rounded theory of quadratic forms over finite fields can be applied.

The weight distribution is a fundamental parameter of Reed-Muller codes. For second order $q$-ary Reed-Muller codes, their weight distributions have been computed in \cite{SB} for $q=2$ and in \cite{McE} for general $q$. Unfortunately, as observed in \cite[p. 2559]{Li}, there are some errors and typos in the computation of \cite{McE} (some essential errors are spotted in Tables 4,8,10 and typos in Tables 3,6). Hence, in this paper, we aim to provide a precise account for the weight distribution of second order $q$-ary Reed-Muller codes, with $q$ being a prime power. An outline is as follows, where the second order $q$-ary Reed-Muller is denoted by $\RM_q(2,m)$.
\begin{itemize}
\item[(1)] Observe that $\RM_q(2,m)$ is a disjoint union of cosets of the repetition code (when $q=2$) or cosets of the first order Reed-Muller code $\RM_q(1,m)$ (when $q>2$), where the coset representatives are exactly all quadratic forms from $\Fq^m$ to $\Fq$ (see \eqref{eqn-RMdecom}).
\item[(2)] When $q=2$, the weight distribution of each coset follows from the number of zeroes to a quadratic form (Proposition~\ref{prop-quadzero}). When $q>2$, the weight distribution of each coset can be derived from the results in \cite{Li} (Propositions~\ref{prop-diseven},~\ref{prop-disodd},~\ref{prop-cosetzero} and \ref{prop-cosetweight}).
     In both cases, the weight distribution of each coset depends only on the rank and type of the quadratic form, which is the coset representative. We remark that the canonical quadratic forms and related terminologies used in \cite{Li} are different from those in this paper. Thus, in order to employ the results in \cite{Li}, we need to build the correspondence between different canonical quadratic forms and terminologies at first (Table~\ref{tab-corres}).
\item[(3)] The number of quadratic forms from $\Fq^m$ to $\Fq$, with given rank and type, has been obtained by McEliece \cite{McE}, following which we can compute the frequency of each weight in $\RM_q(2,m)$.
\end{itemize}

In addition, using a similar idea, the weight distributions of second order $q$-ary homogeneous Reed-Muller codes and second order $q$-ary projective Reed-Muller codes are also computed.

Below, we recall some basic knowledge about quadratic forms over finite fields in Section~\ref{sec2}. In Section~\ref{sec3}, we compute the weight distributions of second order Reed-Muller codes, second order homogeneous Reed-Muller codes and second order projective Reed-Muller codes. Section~\ref{sec4} concludes the paper.

\section{Quadratic forms over finite fields}\label{sec2}

The mathematical mechanism behind second order Reed-Muller codes is the theory of quadratic forms over finite fields. In this section, we introduce some background knowledge about quadratic forms over finite fields.
%For a detailed treatment, please refer to \cite[Chapter 6, Section 2]{LN}.

Let $V$ be an $m$-dimensional vector space over $\Fq$. A quadratic form $Q$ defined on $V$ is a function from $V$ to $\Fq$, such that
\begin{itemize}
\item[(1)] For each ${\bf x} \in V$ and $\la \in \Fq$, $Q(\la {\bf x})=\la^2Q({\bf x})$.
\item[(2)] For each ${\bf x},{\bf y} \in V$, $Q({\bf x}+{\bf y})=Q({\bf x})+Q({\bf y})+B_Q({\bf x},{\bf y})$, where $B_Q$ is a symmetric bilinear form associated with $Q$.
\end{itemize}

For a symmetric bilinear form $B$ defined on $V$, the \emph{radical} of $B$ is defined to be
$$
\Rad B:=\{{\bf y} \in V \mid B({\bf x},{\bf y})=0, \forall {\bf x} \in V \},
$$
which is a vector space over $\Fq$. For a quadratic form $Q$ defined on $V$, the \emph{radical} of $Q$ is a vector space over $\Fq$, namely,
$$
\Rad Q:=Q^{-1}(0) \cap \Rad B_Q,
$$
where $B_Q$ is the symmetric bilinear form associated with $Q$. The \emph{rank} of $Q$ is defined to be
$$
\rank(Q)=m-\dim \Rad Q.
$$

Let $Q$ be a quadratic form defined on $V$, then $Q$ has the following unique representation
$$
Q=\begin{cases}
  \sum_{1 \le i \le j \le m} c_{ij}x_ix_j, c_{ij} \in \Fq & \mbox{if $q$ is even,} \\
  \sum_{1 \le i,j \le m} c_{ij}x_ix_j, c_{ij}=c_{ji} \in \Fq & \mbox{if $q$ is odd.}
\end{cases}
$$
Let $Q_1$ and $Q_2$ be two quadratic forms defined on $V$. They are \emph{equivalent} if there exists an $m \times m$ invertible matrix $A$ over $\Fq$, such that for each ${\bf x} \in V$,
$$
Q_1({\bf x})=Q_2(A{\bf x}).
$$
Thus, two quadratic forms are equivalent, if one can be transformed into the other, by applying an invertible linear transformation on the variables.

Quadratic forms over finite fields have been classified in the sense that each quadratic form is equivalent to a canonical one. Following \cite[Table 1]{McE}, we describe the canonical quadratic forms over finite fields with even and odd characteristic in the next two propositions. Note that we use $\Tr$ to denote the absolute trace function defined on a finite field.

\begin{proposition}\label{prop-canonicaleven}
Let $q$ be an even prime power. For $1 \le r \le m$, each quadratic form from $\Fq^m$ to $\Fq$ with rank $r$ is equivalent to one of the following canonical quadratic forms.

If $r$ is odd,
\begin{itemize}
\item[(1)] $\sum_{i=1}^{(r-1)/2} x_{2i-1}x_{2i}+x_r^2$.
\end{itemize}

If $r$ is even,
\begin{itemize}
\item[(2)] $\sum_{i=1}^{r/2} x_{2i-1}x_{2i}$.
\item[(3)] $\sum_{i=1}^{r/2-1} x_{2i-1}x_{2i}+x_{r-1}^2+x_{r-1}x_r+\la x_r^2$, where $\la$ is a nonzero element of $\Fq$ and satisfies $\Tr(\la)=1$.
\end{itemize}
\end{proposition}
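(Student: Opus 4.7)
The plan is to reduce the classification to the non-degenerate case on a complement of the radical, and then analyze that complement using the characteristic-two structure of alternating bilinear forms. First I would observe that $B_Q(x,x)=0$ (since $0=Q(x+x)=2Q(x)+B_Q(x,x)$ in characteristic $2$), so $B_Q$ is alternating and $\dim V-\dim \Rad B_Q$ is even. Restricting $Q$ to $\Rad B_Q$ yields an additive map satisfying $Q(\la x)=\la^2 Q(x)$, and since $\la\mapsto\la^2$ is a bijection on $\Fq$, its image is either $\zs$ or all of $\Fq$. Accordingly, $\Rad Q$ either equals $\Rad B_Q$ (and $r$ is even) or has codimension one in $\Rad B_Q$ (and $r$ is odd), in which case I would fix $u\in\Rad B_Q$ with $Q(u)=1$ after rescaling by a square root of $Q(u)^{-1}$.

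Next I would pick a complement $W$ of $\Rad B_Q$ in $V$, on which $B_Q$ is non-degenerate alternating of some even dimension $2s$ and $Q|_W$ has trivial radical. The core step is to strip off hyperbolic planes from $W$: given $v_1,v_2\in W$ with $B_Q(v_1,v_2)=1$, the substitution $v_1\mapsto v_1+cv_2$ sends $Q(v_1)$ to zero iff $c^2 Q(v_2)+c+Q(v_1)=0$ has a root in $\Fq$, equivalently iff $\Tr(Q(v_1)Q(v_2))=0$ by the Artin--Schreier criterion; a symmetric adjustment of $v_2$ then puts $Q$ into the form $x_1 x_2$ on the plane, and I induct on its $B_Q$-orthogonal complement. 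If no such pair exists every $2$-plane is anisotropic, but Chevalley--Warning forces any quadratic form in $\ge 3$ variables over $\Fq$ to have a nontrivial zero, so persistent anisotropy can only survive in a single $2$-dimensional summand. When $r$ is even this produces either case (2) (all $s$ planes hyperbolic) or case (3) (one anisotropic plane $x^2+xy+\la y^2$ with $\Tr(\la)=1$, together with $s-1$ hyperbolic planes).

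When $r$ is odd I would additionally absorb any residual anisotropic plane into $u$: for $\lan v_1,v_2\ran$ anisotropic with $Q(v_i)=a_i$ and $B_Q(v_1,v_2)=1$, replacing $v_i$ by $v_i+c_i u$ with $c_i^2=a_i$ (solvable by bijectivity of Frobenius) kills $Q(v_1)$ and $Q(v_2)$ while preserving $B_Q(v_1,v_2)=1$, converting the plane to hyperbolic; the $B_Q$-structure involving $u$ is unaffected since $u\in\Rad B_Q$. Hence in odd rank $Q$ always reduces to $(r-1)/2$ hyperbolic planes plus the contribution $x_r^2$ of $u$, which is exactly case (1). I expect the main obstacle to be the trace/Chevalley--Warning step that pins down the even-rank dichotomy on $W$, since this is where the arithmetic of $\Fq$ actually enters the argument; everything else amounts to bookkeeping with basis changes.
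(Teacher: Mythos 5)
Your argument is correct, but note that the paper does not actually prove Proposition~\ref{prop-canonicaleven}: it simply quotes the classification from McEliece's Table~1 (ultimately Dickson's theorem, cf.\ \cite[Chapter 6]{LN}). So your proposal is a genuine self-contained derivation rather than a variant of the paper's route. The skeleton you use is the standard one and all the key steps check out: $B_Q$ alternating in characteristic $2$, hence $\dim V-\dim\Rad B_Q$ even; $Q$ additive on $\Rad B_Q$ with image $\{0\}$ or $\Fq$, which correctly ties the parity of $r$ to whether $\Rad Q$ has codimension $0$ or $1$ in $\Rad B_Q$; the Artin--Schreier criterion $\Tr(Q(v_1)Q(v_2))=0$ for completing $v_1,v_2$ to a hyperbolic pair; Chevalley--Warning to produce an isotropic vector once $\dim W\ge 3$, so at most one anisotropic plane survives; and the absorption of a residual anisotropic plane into the vector $u$ with $Q(u)=1$ in the odd-rank case, which works precisely because $u\in\Rad B_Q$ leaves $B_Q(v_1,v_2)=1$ untouched while $c_i^2=Q(v_i)$ is solvable by bijectivity of Frobenius. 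Two small points you may wish to make explicit if you write this up: after finding an isotropic $v\ne 0$ via Chevalley--Warning you still need nondegeneracy of $B_Q|_W$ to produce a partner $w$ with $B_Q(v,w)=1$ before the linear adjustment of $w$ applies (your ``no such pair exists'' phrasing slightly conflates the existence of an isotropic vector with the existence of a splittable pair); and in the even anisotropic case one should record the normalization that rescales the anisotropic plane to $x^2+xy+\la y^2$ with leading coefficient $1$, which is where $\Tr(\la)=1$ (hence $\la\ne 0$) comes from. Neither is a gap, just bookkeeping worth spelling out.
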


\begin{proposition}\label{prop-canonicalodd}
Let $q$ be an odd prime power. For $1 \le r \le m$, each quadratic form from $\Fq^m$ to $\Fq$ with rank $r$ is equivalent to one of the following canonical quadratic forms.

If $r$ is odd,
\begin{itemize}
\item[(1)] $\sum_{i=1}^{(r-1)/2} x_{2i-1}x_{2i}+x_r^2$.
\item[(2)] $\sum_{i=1}^{(r-1)/2} x_{2i-1}x_{2i}+\la x_r^2$, $\la$ is a nonsquare of $\Fq$.
\end{itemize}

If $r$ is even,
\begin{itemize}
\item[(3)] $\sum_{i=1}^{r/2-1} x_{2i-1}x_{2i}+x_{r-1}^2-x_r^2$.
\item[(4)] $\sum_{i=1}^{r/2-1} x_{2i-1}x_{2i}+x_{r-1}^2-\la x_r^2$, $\la$ is a nonsquare of $\Fq$.
\end{itemize}
\end{proposition}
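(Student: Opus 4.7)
The plan is to classify rank-$r$ quadratic forms over $\Fq$ ($q$ odd) up to equivalence by first reducing any such form to a diagonal one and then distinguishing the resulting normal forms via the discriminant in $\Fq^*/(\Fq^*)^2 = \{1,\la\}$. I would proceed in three stages: diagonalization, coefficient normalization, and matching to the canonical shapes (1)--(4).

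For diagonalization, since $\mathrm{char}(\Fq)\ne 2$ the polarization identity gives $B_Q(v,v) = 2Q(v)$, so $Q$ does not vanish identically on any nonzero subspace complementary to $\Rad Q$. Picking $v$ with $Q(v)\ne 0$, splitting $V = \lan v\ran \oplus v^{\perp_{B_Q}}$, and inducting on the rank produces $Q \sim \sum_{i=1}^{r} a_i x_i^2$ with each $a_i \in \Fq^*$. Rescaling $x_i$ by a square root of $a_i$ or of $a_i/\la$ then places each $a_i$ in $\{1,\la\}$.

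Next I would reduce the number of $\la$'s via the identity $\la x^2 + \la y^2 \sim x^2 + y^2$. This rests on the surjectivity of $(u,v)\mapsto u^2+v^2$, which follows from the standard pigeonhole argument on the two size-$(q+1)/2$ sets $\{u^2:u\in\Fq\}$ and $\{\la^{-1}-v^2:v\in\Fq\}$; once the form $\la(x^2+y^2)$ represents $1$, an explicit change of basis yields an equivalent diagonal form, and the discriminants $\la\cdot\la$ and $1\cdot 1$ agree modulo squares. Iterating the pairing, any rank-$r$ form is equivalent to $x_1^2+\cdots+x_{r-1}^2+d\,x_r^2$ with $d\in\{1,\la\}$ being the discriminant class of $Q$.

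Finally, I would match these two normal forms to the four canonical shapes using the hyperbolic-plane identity $x_1^2-x_2^2 \sim x_1x_2$ (realized by $x_1=(u+v)/2$, $x_2=(u-v)/2$ together with rescaling). A direct computation of the determinant of the Gram matrix shows that, within each parity of $r$, the two listed canonical forms lie in different square classes of $\Fq^*/(\Fq^*)^2$; hence they are pairwise inequivalent and, by the reduction above, together exhaust the two possibilities for $d$. The main obstacle I expect is the stage-three bookkeeping: translating between the ``$x_{2i-1}x_{2i}$'' presentation and the ``$u^2-v^2$'' presentation while tracking square classes carefully enough to get the parity of $(-1)^{\lfloor r/2\rfloor}$ and the final discriminant class correctly in each of the four cases.
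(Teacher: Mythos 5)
Your argument is correct, but note that the paper does not actually prove Proposition~\ref{prop-canonicalodd}: it is quoted from McEliece's Table~1 (and ultimately from the classical theory in Lidl--Niederreiter), so there is no in-paper proof to compare against. Your route --- diagonalize by splitting off an anisotropic vector and its $B_Q$-orthogonal complement, normalize each coefficient into $\{1,\la\}$, collapse pairs via $\la x^2+\la y^2\sim x^2+y^2$, and then separate the two resulting normal forms by the square class of the discriminant of the nondegenerate part --- is the standard classification argument and it goes through. It is also consistent with the paper's own toolkit: your sum-of-two-squares step is literally Lemma~\ref{lem-prep} in the Appendix, and your hyperbolic-plane conversions $x_1x_2\sim u^2-v^2$ are the substitutions $w_i=z_i\pm z_{i+\cdot}$ used there in the proof of Lemma~\ref{lem-equiv}; the Appendix merely applies these moves to establish the Table~\ref{tab-corres} correspondence rather than the classification itself. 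Two small points you should make explicit in the write-up: (i) for degenerate forms the invariant you are using is the square class of the determinant of the Gram matrix of $Q$ restricted to a complement of $\Rad Q$, and you need to note that this is independent of the choice of complement and preserved under equivalence (routine, since Gram matrices transform as $A^{T}MA$); and (ii) the final ``exhaustion'' step requires checking that each displayed canonical form genuinely has rank $r$, i.e., that its Gram matrix on the first $r$ coordinates is nonsingular in odd characteristic --- otherwise you could not conclude that the two inequivalent canonical forms account for both discriminant classes.
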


Now, we are ready to define the type of a quadratic form.

\begin{definition}\label{def-canonquad}
Let $Q$ be a quadratic form over $\Fq$.
\begin{itemize}
\item[(1)] When $q$ is even and $Q$ has even rank, $Q$ is of \emph{type 1} if it is equivalent to the canonical quadratic form in Proposition~\ref{prop-canonicaleven}(2) and of \emph{type -1} if it is equivalent to the canonical quadratic form in Proposition~\ref{prop-canonicaleven}(3).
\item[(2)] When $q$ is odd, $Q$ is of \emph{type 1} if it is equivalent to the canonical quadratic form in Proposition~\ref{prop-canonicalodd}(1) or \ref{prop-canonicalodd}(3), and of \emph{type -1} if it is equivalent to the canonical quadratic form in Proposition~\ref{prop-canonicalodd}(2) or \ref{prop-canonicalodd}(4).
\end{itemize}
A zero quadratic form, whose rank is $0$, is defined to be of type $1$.
\end{definition}

Combining Propositions~\ref{prop-canonicaleven} and~\ref{prop-canonicalodd}, we can see that up to equivalence, a quadratic form $Q$ over finite field $\Fq$ is determined by its rank and type, except when $q$ is even and $Q$ has odd rank, in which the rank solely determines the quadratic form.

Next, we introduce more notations. For $2 \le 2j \le m$ and $\tau \in \{1,-1\}$, we use $v_{2j,\tau}:=v_{2j,\tau}(q,m)$ to denote the number of quadratic forms from $\Fq^m$ to $\Fq$, with rank $2j$ and type $\tau$. We use $v_{0,1}:=v_{0,1}(q,m)$ to denote the number of quadratic forms from $\Fq^m$ to $\Fq$, with rank $0$ and type $1$. Moreover, for $1 \le 2j+1 \le m$, we use $v_{2j+1}:=v_{2j+1}(q,m)$ to denote the number of quadratic forms from $\Fq^m$ to $\Fq$ with rank $2j+1$. Considering the action of orthogonal groups on quadratic forms over finite fields, the numbers $v_{2j,\tau}$ and $v_{2j+1}$ have been obtained in \cite{McE}.

%Let $q$ be an even prime power. For $2 \le 2j \le m$ and $\tau \in \{1,-1\}$, we use $u_{2j,\tau}:=u_{2j,\tau}(q,m)$ to denote the number of quadratic forms of rank $2j$ and type $\tau$ from $\Fq^m$ to $\Fq$. For $1 \le 2j+1 \le m$, we use $u_{2j+1}:=u_{2j+1}(q,m)$ to denote the number of quadratic forms of rank $2j+1$ from $\Fq^m$ to $\Fq$. Let $q$ be an odd prime power.

\begin{proposition}[{\cite[Table 3]{McE}}]\label{prop-quanumber}
For quadratic forms from $\Fq^m$ to $\Fq$, we have the following.
\begin{itemize}
\item[(1)] $v_1=q^m-1$ and
 $$
   \quad v_{2j+1}=q^{j^2+j}\frac{\prod_{i=m-2j}^m (q^i-1)}{\prod_{i=1}^j(q^{2i}-1)}, \quad 3 \le 2j+1 \le m.
 $$
 %where
% $$
% \ep=\begin{cases}
%   1 & \mbox{if $q$ is even,} \\
%   \frac{1}{2} & \mbox{if $q$ is odd.}
% \end{cases}
% $$
\item[(2)] $v_{0,1}=1$. For $2 \le 2j \le m$ and $\tau \in \{1,-1\}$,
 $$
  v_{2j,\tau}=\frac{q^{j^2}(q^{j}+\tau)}{2}\frac{\prod_{i=m-2j+1}^m (q^i-1)}{\prod_{i=1}^j(q^{2i}-1)}.
 $$

\end{itemize}
\end{proposition}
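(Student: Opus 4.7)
The plan is to invoke the orbit--stabilizer theorem for the natural action of $GL_m(\Fq)$ on the set of quadratic forms $\Fq^m \to \Fq$ by change of variables. By the classification in Propositions~\ref{prop-canonicaleven} and~\ref{prop-canonicalodd}, the orbits are indexed by rank together with type (whenever a type invariant is defined), so each orbit size equals $|GL_m(\Fq)|/|\mathrm{Stab}(Q_0)|$ for a canonical representative $Q_0$. In the single situation where two orbits are pooled together in the statement ($q$ odd with $r$ odd, where $v_{2j+1}$ combines the contributions of types $+1$ and $-1$), the two orbit sizes are added.

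Next I would identify the stabilizer explicitly. Choose $Q_0$ in canonical form with $\Rad Q_0 = \mathrm{span}(e_{r+1},\ldots,e_m)$. Since isometries preserve the radical, any $A \in GL_m(\Fq)$ fixing $Q_0$ has the block form
\[
A = \begin{pmatrix} A_{11} & 0 \\ A_{21} & A_{22} \end{pmatrix},
\]
with $A_{11} \in GL_r(\Fq)$, $A_{22} \in GL_{m-r}(\Fq)$ and $A_{21}$ arbitrary; expanding $Q_0(A{\bf x}) = Q_0({\bf x})$ then forces $A_{11}$ to lie in the orthogonal group $O^{\tau}_r(\Fq)$ of the nondegenerate part $Q_0'$, while $A_{21}$ and $A_{22}$ remain unconstrained. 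Consequently
\[
|\mathrm{Stab}(Q_0)| = |O^{\tau}_r(\Fq)| \cdot q^{r(m-r)} \cdot |GL_{m-r}(\Fq)|.
\]

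The proof is completed by substituting $|GL_m(\Fq)| = q^{m(m-1)/2}\prod_{i=1}^m(q^i-1)$ together with the classical orders $|O_{2j+1}(\Fq)| = 2q^{j^2}\prod_{i=1}^j(q^{2i}-1)$ and $|O^{\tau}_{2j}(\Fq)| = 2q^{j(j-1)}(q^j-\tau)\prod_{i=1}^{j-1}(q^{2i}-1)$ into the index $|GL_m(\Fq)|/|\mathrm{Stab}(Q_0)|$, and then simplifying. In the even-rank case the identity $q^{2j}-1 = (q^j-1)(q^j+1)$ converts the factor $(q^j-\tau)^{-1}$ arising from the stabilizer into the numerator $(q^j+\tau)$ of $v_{2j,\tau}$; in odd characteristic with odd rank the doubling of the orbit size reintroduces a factor of $2$ that cancels the $2$ in $|O_{2j+1}|$.

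The main obstacle is the case $q$ even. For odd rank the symmetric bilinear form $B_{Q_0'}$ is alternating with a one-dimensional radical on which $Q_0'$ is nonzero, and a direct analysis shows that the stabilizer of $Q_0'$ in $GL_r(\Fq)$ is isomorphic to $\mathrm{Sp}_{2j}(\Fq)$, of order $q^{j^2}\prod_{i=1}^j(q^{2i}-1)$, which is exactly half of its odd-characteristic counterpart; this loss of the factor $2$ precisely compensates for the absence of a second type and so the uniform formula for $v_{2j+1}$ persists. For even rank in characteristic $2$ one checks separately that $|O^{\tau}_{2j}(\Fq)|$ retains the same order, and the computation proceeds unchanged.
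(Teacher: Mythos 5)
Your orbit--stabilizer computation is correct and is exactly the argument the paper has in mind: the paper states this proposition without proof, citing \cite[Table 3]{McE} and remarking only that the counts follow from ``considering the action of orthogonal groups on quadratic forms over finite fields,'' which is precisely what you carry out in detail. The block-triangular shape of the stabilizer, the group orders you quote, the cancellation $(q^j-\tau)^{-1}(q^{2j}-1)=q^j+\tau$, and the characteristic-$2$ identification of the odd-rank isometry group with $\mathrm{Sp}_{2j}(\Fq)$ (whose order is half of the odd-characteristic $|O_{2j+1}|$, compensating for the single equivalence class) all check out, so your proposal supplies a complete proof of the cited result by the same method.
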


\begin{remark}
Let $q$ be an odd prime power. For $1 \le 2j+1 \le m$ and $\tau \in \{1,-1\}$, the number of quadratic forms from $\Fq^m$ to $\Fq$ with rank $2j+1$ and type $\tau$ was also obtained in \cite[Table 3]{McE}. We only list $v_{2j+1}$ in Proposition~\ref{prop-quanumber}, which is sufficient for our computation below.
\end{remark}

Let $f$ be a function from $\Fq^m$ to $\Fq$. Define
$$
N(f)=|\{{\bf x} \in \Fq^m \mid f({\bf x})=0 \}|.
$$
Let $\cL$ be the set of all linear functions from $\Fq^m$ to $\Fq$. For each $c \in \Fq$, define $\ol{c}$ to be a constant function from $\Fq^m$ to $\Fq$, which sends each element of $\Fq^m$ to $c$. Let $Q$ be a quadratic form from $\Fq^m$ to $\Fq$. For each $c \in \Fq$, the multiset
$$
\{ N(f) \mid f \in \{Q+L+\ol{c} \mid L \in \cL\} \}
$$
has been computed in \cite[Appendix A]{Li}. Consequently, we can determine the multiset
$$
\{ N(f) \mid f \in \{Q+L+\ol{c} \mid L \in \cL, c \in \Fq \} \},
$$
which will be used in the computation of weight distributions. It is worthy noting that when $q$ is odd, we used different canonical quadratic forms in \cite{Li}. Thus, in order to exploit Proposition~\ref{prop-quanumber} and the results in \cite[Appendix A]{Li}, we have to establish the correspondence between the canonical quadratic forms in \cite[Proposition 3.8]{Li} and those in Propositions~\ref{prop-canonicaleven} and \ref{prop-canonicalodd}. Indeed, the relation is summarized in Table~\ref{tab-corres}. The proof of this relation is technical and included in the Appendix.

\begin{table}[h!]
\ra{1.3}
\begin{center}
\caption{The correspondence of different terminologies of canonical quadratic forms in Definition~\ref{def-canonquad} and \cite[Proposition 3.8]{Li}}
\label{tab-corres}
\vspace{4pt}
\begin{tabular}{|c|c|c|}
\hline
\multirow{3}{*}{} & terminologies of & terminologies of \\
        &canonical quadratic forms  & canonical quadratic forms  \\
        &in Definition~\ref{def-canonquad} & in \cite[Proposition 3.8]{Li} \\ \hline
\multirow{3}{*}{$q$ even}& odd rank $r$  &  odd rank $r$, type $1$  \\
                         & even rank $r$, type $1$  &  even rank $r$, type $0$  \\
                         & even rank $r$, type $-1$  &  even rank $r$, type $2$  \\ \hline
\multirow{2}{*}{$q$ odd}& odd rank $r$, type $\tau$, $\tau=\pm1$ & odd rank $r$, type $\tau\eta(-1)^{\frac{r-1}{2}}$   \\
                         & even rank $r$, type $\tau$, $\tau=\pm1$ & even rank $r$, type $\tau\eta(-1)^{\frac{r}{2}}$   \\ \hline
%\multirow{2}{*}{$q \equiv 1 \bmod 4$} & rank $r$, type $1$ & rank $r$, type $1$ \\
%                                      & rank $r$, type $-1$ & rank $r$, type $-1$ \\ \hline
%\multirow{4}{*}{$q \equiv 3 \bmod 4$} & rank $r \equiv 0,1 \bmod 4$, type $1$ & rank $r \equiv 0,1 \bmod 4$, type $1$ \\
%                                      & rank $r \equiv 0,1 \bmod 4$, type $-1$ & rank $r \equiv 0,1 \bmod 4$, type $-1$ \\
%                                      & rank $r \equiv 2,3 \bmod 4$, type $1$ & rank $r \equiv 2,3 \bmod 4$, type $-1$ \\
%                                      & rank $r \equiv 2,3 \bmod 4$, type $-1$ & rank $r \equiv 2,3 \bmod 4$, type $1$ \\ \hline
\end{tabular}
\end{center}
\end{table}

When $q$ is odd, we use $S$ to denote the set of nonzero squares in $\Fq$ and $NS$ the set of nonsquares in $\Fq$. Following Table~\ref{tab-corres}, we can rephrase \cite[Lemmas A.2, A.4]{Li} in the next two propositions.

\begin{proposition}\label{prop-diseven}
Let $q$ be an even prime power and $Q$ a quadratic form from $\Fq^m$ to $\Fq$. Let $\cL$ be the set of all linear functions from $\Fq^m$ to $\Fq$ and $c \in \Fq$. Suppose $f$ ranges over $\{Q+L+\ol{c} \mid L \in \cL\}$. Then the following holds.

\noindent

(1) Let $Q$ have odd rank $r$. If $c=0$, then
$$
N(f)=\left\{
\begin{aligned}
  &q^{m-1} & \mbox{$q^m-q^{r}+q^{r-1}$ times,} \\
  &q^{m-1}\pm q^{m-\frac{r+1}{2}} & \mbox{$(q-1)\frac{q^{r-1}\pm q^{\frac{r-1}{2}}}{2}$ times.}
\end{aligned}\right.
$$
If $c \ne 0$, then
$$
N(f)=\left\{
\begin{aligned}
  &q^{m-1} & \mbox{$q^m-q^{r}+q^{r-1}$ times,} \\
  &q^{m-1}\pm q^{m-\frac{r+1}{2}} & \mbox{$\frac{q^{r}-q^{r-1}\mp q^{\frac{r-1}{2}}}{2}$ times.}
\end{aligned}\right.
$$

(2) Let $Q$ have even rank $r \ge 2$ and type $\tau$. If $c=0$, then
$$
N(f)=\left\{
\begin{aligned}
  &q^{m-1} & \mbox{$q^m-q^{r}$ times,} \\
  &q^{m-1}+\tau q^{m-\frac{r+2}{2}}(q-1) & \mbox{$q^{r-1}+\tau q^{\frac{r-2}{2}}(q-1)$ times,}\\
  &q^{m-1}-\tau q^{m-\frac{r+2}{2}} & \mbox{$(q-1)(q^{r-1}-\tau q^{\frac{r-2}{2}})$ times.}
\end{aligned}\right.
$$
If $c \ne 0$, then
$$
N(f)=\left\{
\begin{aligned}
  &q^{m-1} & \mbox{$q^m-q^{r}$ times,} \\
  &q^{m-1}+\tau q^{m-\frac{r+2}{2}}(q-1) & \mbox{$q^{r-1}-\tau q^{\frac{r-2}{2}}$ times,}\\
  &q^{m-1}-\tau q^{m-\frac{r+2}{2}} & \mbox{$(q-1)q^{r-1}+\tau q^{\frac{r-2}{2}}$ times.}
\end{aligned}\right.
$$
\end{proposition}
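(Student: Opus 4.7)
The plan is to obtain Proposition~\ref{prop-diseven} as a direct transcription of \cite[Lemma A.2]{Li} via the dictionary in Table~\ref{tab-corres}. In \cite[Lemma A.2]{Li}, the multiset $\{N(Q+L+\ol{c}) \mid L \in \cL\}$ is already tabulated in terms of the rank $r$ and Li's ``type'' parameter, which for $q$ even takes values in $\{0,1,2\}$ according to the canonical forms of \cite[Proposition 3.8]{Li}. The task is thus to translate Li's labels into the labels used in Definition~\ref{def-canonquad} and to collect the resulting frequencies.

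First, I would reduce to canonical form. Since $L \mapsto L \circ A$ is a bijection of $\cL$ for any invertible $m \times m$ matrix $A$ over $\Fq$, and $\ol{c}$ is fixed by such a substitution, the multiset $\{N(Q+L+\ol{c}) \mid L \in \cL\}$ depends only on the equivalence class of $Q$. Hence we may assume $Q$ is one of the canonical forms of Proposition~\ref{prop-canonicaleven}. Next, I would read off the translation from Table~\ref{tab-corres}: for $q$ even, odd rank $r$ corresponds to Li's type $1$ (with rank alone determining the equivalence class, in accordance with the absence of a type parameter in part~(1)); even rank $r$ with $\tau=1$ corresponds to Li's type $0$, and with $\tau=-1$ corresponds to Li's type $2$.

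Substituting these identifications into \cite[Lemma A.2]{Li} and splitting according to $c=0$ versus $c \ne 0$ yields exactly the six sub-cases stated here. In the even-rank cases, the sign $\tau \in \{1,-1\}$ tracks the swap between Li's type $0$ and type $2$ outputs, so that what appear as two separate tables in \cite{Li} combine into the single $\tau$-symmetric statement in part~(2); in the odd-rank cases, the $\pm$ in both the value $q^{m-1}\pm q^{m-(r+1)/2}$ and in the frequencies is inherited directly from Li's formulas.

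The only real obstacle is bookkeeping: one must carefully align the $\pm$ signs in Li's expressions with the $\tau=\pm 1$ convention of Definition~\ref{def-canonquad}, and verify that the multiplicity counts (in particular the $(q-1)$ factors and the $\pm q^{(r-2)/2}$ summands) survive the substitution. Granting Table~\ref{tab-corres}, whose proof is deferred to the Appendix, this reduces to a mechanical case-by-case comparison and contributes no further mathematical content.
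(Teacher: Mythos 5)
Your proposal matches the paper's own treatment: the paper likewise presents Proposition~\ref{prop-diseven} as a direct rephrasing of \cite[Lemma A.2]{Li} using the dictionary of Table~\ref{tab-corres} (noting, as you do, that for $q$ even the two sets of labels are merely different names for the same canonical forms). Your added remark that the multiset $\{N(Q+L+\ol{c}) \mid L \in \cL\}$ is an invariant of the equivalence class of $Q$ is correct and makes explicit a reduction the paper leaves implicit.
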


\begin{proposition}\label{prop-disodd}
Let $q$ be an odd prime power and $Q$ a quadratic form from $\Fq^m$ to $\Fq$. Let $\cL$ be the set of all linear functions from $\Fq^m$ to $\Fq$ and $c \in \Fq$. Suppose $f$ ranges over $\{Q+L+\ol{c} \mid L \in \cL\}$. Then the following holds.

\noindent

(1) Let $Q$ have odd rank $r$ and type $\tau$. If $c=0$, then
$$
N(f)=\left\{
\begin{aligned}
  &q^{m-1} & \mbox{$q^m-q^{r}+q^{r-1}$ times,} \\
  &q^{m-1}\pm\tau q^{m-\frac{r+1}{2}} & \mbox{$\frac{(q-1)}{2}(q^{r-1}\pm\tau q^{\frac{r-1}{2}})$ times.}
\end{aligned}\right.
$$
If $c \in S$, then
$$
N(f)=\left\{
\begin{aligned}
  &q^{m-1} & \mbox{$q^m-q^{r}+q^{r-1}+\tau q^{\frac{r-1}{2}}$ times,} \\
  &q^{m-1}+\tau q^{m-\frac{r+1}{2}} & \mbox{$\frac{q-1}{2}q^{r-1}-\tau q^{\frac{r-1}{2}}$ times,} \\
  &q^{m-1}-\tau q^{m-\frac{r+1}{2}} & \mbox{$\frac{q-1}{2}q^{r-1}$ times.}
\end{aligned}\right.
$$
If $c \in NS$, then
$$
N(f)=\left\{
\begin{aligned}
  &q^{m-1} & \mbox{$q^m-q^{r}+q^{r-1}-\tau q^{\frac{r-1}{2}}$ times,} \\
  &q^{m-1}+\tau q^{m-\frac{r+1}{2}} & \mbox{$\frac{q-1}{2}q^{r-1}$ times,} \\
  &q^{m-1}-\tau q^{m-\frac{r+1}{2}} & \mbox{$\frac{q-1}{2}q^{r-1}+\tau q^{\frac{r-1}{2}}$ times.}
\end{aligned}\right.
$$

(2) Let $Q$ have even rank $r \ge 2$ and type $\tau$. If $c=0$, then
$$
N(f)=\left\{
\begin{aligned}
  &q^{m-1} & \mbox{$q^m-q^{r}$ times,} \\
  &q^{m-1}+\tau q^{m-\frac{r+2}{2}}(q-1) &\mbox{$q^{r-1}+\tau q^{\frac{r-2}{2}}(q-1)$ times,}\\
  &q^{m-1}-\tau q^{m-\frac{r+2}{2}} & \mbox{$(q-1)(q^{r-1}-\tau q^{\frac{r-2}{2}})$ times.}
\end{aligned}\right.
$$
If $c\ne0$, then
$$
N(f)=\left\{
\begin{aligned}
  &q^{m-1} & \mbox{$q^m-q^{r}$ times,} \\
  &q^{m-1}+\tau q^{m-\frac{r+2}{2}}(q-1) &\mbox{$q^{r-1}-\tau q^{\frac{r-2}{2}}$ times,}\\
  &q^{m-1}-\tau q^{m-\frac{r+2}{2}} & \mbox{$(q-1)q^{r-1}+\tau q^{\frac{r-2}{2}}$ times.}
\end{aligned}\right.
$$
\end{proposition}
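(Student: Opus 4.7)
The approach is to reduce the statement to \cite[Lemma A.4]{Li}, which establishes the same multiset but in the canonical-form language of \cite[Proposition 3.8]{Li}. The reduction proceeds in two steps.

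First, I would note that the multiset $\{N(f) \mid f \in \{Q+L+\ol{c} \mid L \in \cL\}\}$ depends only on the equivalence class of $Q$. Indeed, if $Q'({\bf x})=Q(A{\bf x})$ for an invertible $A$, then $L \mapsto L\circ A$ is a bijection on $\cL$, so $\{Q'+L+\ol{c} \mid L \in \cL\}=\{g \circ A \mid g \in \{Q+L+\ol{c} \mid L \in \cL\}\}$, and composition with the bijection $A$ preserves $N(\cdot)=|(\cdot)^{-1}(0)|$. Hence it is enough to verify each case on a single canonical representative of the equivalence class of $Q$.

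Second, I would apply \cite[Lemma A.4]{Li} to the chosen canonical representative. That lemma is phrased using a type $\tau_L\in\{1,-1\}$ in the sense of \cite{Li}, and Table~\ref{tab-corres} relates it to the current type $\tau$ by $\tau_L=\tau\ep$, where $\ep=\eta(-1)^{(r-1)/2}$ for odd rank $r$ and $\ep=\eta(-1)^{r/2}$ for even rank $r$. After substituting $\tau_L=\tau\ep$, every value of the form $q^{m-1}\pm \tau_L q^{m-(r+1)/2}$ (or with $(r+2)/2$) becomes $q^{m-1}\pm \tau\ep q^{m-(r+1)/2}$, and the multiplicities containing $\tau_L$ are transformed in the same way. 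Since $\ep\in\{1,-1\}$, the factor $\ep$ either leaves the two $\pm$ lines untouched or interchanges them, and both outcomes reproduce the distributions listed in the proposition.

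The principal obstacle will be the bookkeeping when $c\ne 0$, because in \cite[Lemma A.4]{Li} the dependence on $c$ is encoded by an extra $\eta(c)$ and the three multiset entries then have asymmetric multiplicities. I would split into $c\in S$ (so $\eta(c)=1$) and $c\in NS$ (so $\eta(c)=-1$), and track the combined sign $\tau\ep\eta(c)$ consistently in both $N(f)$ and its multiplicity on every line. Once this sign is distributed uniformly, the same $\pm$-relabelling as before converts the statement of \cite[Lemma A.4]{Li} into the formulas of Proposition~\ref{prop-disodd}, with no additional arithmetic required.
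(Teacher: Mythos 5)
Your proposal matches the paper's own treatment: Proposition~\ref{prop-disodd} is presented there precisely as a rephrasing of \cite[Lemma A.4]{Li} via the type correspondence of Table~\ref{tab-corres} (whose proof the paper defers to its Appendix). Your added observation that the multiset $\{N(f) \mid f \in \{Q+L+\ol{c} \mid L \in \cL\}\}$ is invariant under equivalence of quadratic forms is a correct, and implicitly needed, justification for reducing to a canonical representative before invoking that lemma.
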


Let $Q$ be a quadratic form from $\Fq^m$ to $\Fq$ and $c \in \Fq$, Propositions~\ref{prop-diseven} and \ref{prop-disodd} describe the multiset
$$
\{ N(f) \mid f \in \{Q+L+\ol{c} \mid L \in \cL\} \}.
$$
%Although these results are immediate consequences of \cite[Lemmas A.2, A.4]{Li} and Table~\ref{tab-corres}, we still include a detailed account here, adapting to the canonical quadratic forms introduced in Propositions~\ref{prop-canonicaleven},~\ref{prop-canonicalodd} and the terminologies used in Definition~\ref{def-canonquad}. We think these results are of independent interest.
Now we are ready to determine the multiset $\{ N(f) \mid f \in \{Q+L+\ol{c} \mid L \in \cL, c \in \Fq \} \}$. The following result follows from Propositions~\ref{prop-diseven} and \ref{prop-disodd}.

\begin{proposition}\label{prop-cosetzero}
Let $Q$ be a quadratic form from $\Fq^m$ to $\Fq$. Suppose $f$ ranges over $\{ Q+L+\ol{c} \mid L \in \cL, c \in \Fq \}$. Then the following holds.
\begin{itemize}
\item[(1)] If $Q$ has odd rank $r$, then
$$
N(f)=\left\{
\begin{aligned}
  &q^{m-1} & \mbox{$q^{m+1}-q^{r+1}+q^r$ times,} \\
  &q^{m-1}\pm q^{m-\frac{r+1}{2}} & \mbox{$\frac{q-1}{2}q^r$ times.}
\end{aligned}\right.
$$
\item[(2)] If $Q$ has even rank $r$ and type $\tau$, then
$$
N(f)=\left\{
\begin{aligned}
  &q^{m-1} & \mbox{$q^{m+1}-q^{r+1}$ times,} \\
  &q^{m-1}+\tau q^{m-\frac{r+2}{2}}(q-1) & \mbox{$q^r$ times,} \\
  &q^{m-1}-\tau q^{m-\frac{r+2}{2}} & \mbox{$(q-1)q^r$ times.}
\end{aligned}\right.
$$
\end{itemize}
\end{proposition}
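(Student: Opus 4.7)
The plan is to compute the desired multiset by partitioning the indexing family as $\bigsqcup_{c\in\Fq}\{Q+L+\ol{c}\mid L\in\cL\}$ and then summing, for every candidate value of $N(f)$, the frequencies already provided by Propositions~\ref{prop-diseven} and~\ref{prop-disodd} for each fixed $c$. Since those two propositions enumerate exactly the multiset of $N$-values coming from each stratum, the whole argument reduces to straightforward addition of frequencies.

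Concretely, when $q$ is even and $Q$ has odd rank $r$, I would add the $c=0$ frequencies from Proposition~\ref{prop-diseven}(1) to $q-1$ copies of the $c\ne 0$ frequencies. The common value $q^{m-1}$ contributes $(q^m-q^r+q^{r-1})+(q-1)(q^m-q^r+q^{r-1})=q^{m+1}-q^{r+1}+q^r$, and the $q^{(r-1)/2}$ corrections in the two rows cancel to leave $\frac{q-1}{2}q^r$ for each of $q^{m-1}\pm q^{m-(r+1)/2}$. The even-rank subcase is analogous: the $\tau q^{(r-2)/2}$ terms cancel between the $c=0$ and $c\ne 0$ rows, producing exactly the frequencies in part~(2).

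When $q$ is odd I would instead split the sum into the three strata $c=0$, $c\in S$, and $c\in NS$, using $|S|=|NS|=\frac{q-1}{2}$, and apply Proposition~\ref{prop-disodd}. For even rank the $c\ne 0$ frequencies are the same for $c\in S$ and $c\in NS$, so the addition is formally identical to the even-$q$ even-rank calculation. For odd rank the corrections $\pm\tau q^{(r-1)/2}$ from the $c\in S$ and $c\in NS$ rows cancel pairwise, and this cancellation is precisely what explains why part~(1) treats $q^{m-1}+q^{m-(r+1)/2}$ and $q^{m-1}-q^{m-(r+1)/2}$ symmetrically and is independent of $\tau$.

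There is no real obstacle beyond bookkeeping: the only care required is to track the signs of the $\tau$- and $q^{(r-1)/2}$-type corrections in the odd-$q$ odd-rank case, where three strata must be combined. A convenient sanity check is that the frequency totals sum to $|\cL|\cdot q=q^{m+1}$, the cardinality of the indexing family.
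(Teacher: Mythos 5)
Your proposal is correct and matches the paper's approach exactly: the paper simply states that Proposition~\ref{prop-cosetzero} follows from Propositions~\ref{prop-diseven} and~\ref{prop-disodd}, and the intended argument is precisely the stratification over $c=0$, $c\ne 0$ (or $c=0$, $c\in S$, $c\in NS$ for odd $q$) with the frequency additions and cancellations you describe. Your sanity check that the frequencies sum to $q^{m+1}=|\cL|\cdot q$ is a nice touch that the paper omits.
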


Note that when $Q$ has odd rank, the multiset $\{ N(f) \mid f \in \{Q+L+\ol{c} \mid L \in \cL, c \in \Fq \} \}$ does not depend on the type of $Q$.

We finally mention the following well known result concerning $N(Q)$, where $Q$ is a quadratic form from $\Fq^m$ to $\Fq$.

\begin{proposition}\label{prop-quadzero}
Let $Q$ be a quadratic form from $\Fq^m$ to $\Fq$. We have
$$
N(Q)=
\begin{cases}
  q^{m-1} & \mbox{if $Q$ has odd rank,} \\
  q^{m-1}+\tau q^{m-\frac{r+2}{2}}(q-1) & \mbox{if $Q$ has even rank $r$ and type $\tau$.}
\end{cases}
$$
\end{proposition}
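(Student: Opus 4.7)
The plan is to reduce to canonical form and compute directly, exploiting the fact that $N(Q)$ is an equivalence invariant. If $Q_1({\bf x}) = Q_2(A{\bf x})$ for some invertible $A$, then ${\bf x} \mapsto A{\bf x}$ is a bijection of $\Fq^m$ carrying the zero set of $Q_1$ onto that of $Q_2$, so $N(Q_1) = N(Q_2)$. Hence it suffices to establish the formula when $Q$ is one of the canonical forms listed in Propositions~\ref{prop-canonicaleven} and~\ref{prop-canonicalodd}. If $Q$ has rank $r$, the canonical representative involves only the first $r$ coordinates, so the remaining $m-r$ coordinates are free and $N(Q) = q^{m-r}\cdot \widetilde N$, where $\widetilde N$ is the number of zeros of the rank-$r$ canonical form inside $\Fq^r$. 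The task thus reduces to evaluating $\widetilde N$ in each of the five canonical families.

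For the odd-rank cases the target is $\widetilde N = q^{r-1}$. In even characteristic this is immediate: the canonical form is $\sum_{i=1}^{(r-1)/2} x_{2i-1}x_{2i} + x_r^2$, and since $x\mapsto x^2$ is a bijection on $\Fq$, $x_r$ is uniquely determined by the other coordinates. In odd characteristic, the same count follows from a short character-sum argument that shows the contribution of the extra square term cancels when summed over $\Fq$, so $\widetilde N = q^{r-1}$ regardless of type.

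For the even-rank cases I would induct on $r/2$ using the elementary identities $|\{(x,y)\in\Fq^2: xy = a\}| = q-1$ for $a\ne 0$ and $= 2q-1$ for $a=0$ to handle the hyperbolic summands. For the type $\tau = 1$ canonical forms (in odd characteristic, rewritten as a purely hyperbolic form via $x^2 - y^2 = (x-y)(x+y)$), the induction yields $\widetilde N = q^{r-1} + q^{r/2-1}(q-1)$. For the type $\tau = -1$ canonical forms, the extra binary summand is anisotropic: in even characteristic, $x^2 + xy + \lambda x_r^2$ with $\Tr(\lambda)=1$ is irreducible; in odd characteristic, $x^2 - \lambda y^2$ with $\lambda$ a nonsquare is the norm form of $\Fq(\sqrt\lambda)/\Fq$. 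In both cases this binary form has exactly one zero and represents every nonzero element of $\Fq$ with multiplicity $q+1$; feeding these counts into the same recursion produces $\widetilde N = q^{r-1} - q^{r/2-1}(q-1)$. Multiplying by $q^{m-r}$ yields the stated formula with the correct sign $\tau$.

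The proof is conceptually straightforward; the main obstacle is purely bookkeeping, namely ensuring that the sign convention for type in Definition~\ref{def-canonquad} matches the sign attached to $q^{m-(r+2)/2}(q-1)$ in the final formula, and that the anisotropic-binary-form step is executed consistently in both characteristics.
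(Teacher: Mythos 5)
Your argument is correct, but it takes a genuinely different route from the paper. The paper disposes of this proposition in one line, citing Theorems 6.26, 6.27 and 6.32 of Lidl--Niederreiter together with Table~\ref{tab-corres} to translate their conventions into the rank/type terminology of Definition~\ref{def-canonquad}. You instead give a self-contained computation: reduce to the canonical forms of Propositions~\ref{prop-canonicaleven} and~\ref{prop-canonicalodd} via the equivalence invariance of $N$, split off the $q^{m-r}$ free coordinates, and count zeros of the rank-$r$ canonical form by peeling off hyperbolic planes (using $|\{(x,y):xy=a\}|=q-1$ or $2q-1$) and handling the residual binary anisotropic summand as a norm form of $\F_{q^2}/\Fq$, which has one zero and represents each nonzero value $q+1$ times. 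All the intermediate counts you state are correct, and the sign conventions do come out matching Definition~\ref{def-canonquad} (in particular, for odd rank the answer $q^{r-1}$ is independent of $\lambda$, consistent with the paper's remark that odd-rank counts do not see the type). What the paper's approach buys is brevity, at the price of depending on the correspondence table whose verification occupies the Appendix; what your approach buys is independence from both the external reference and that translation step, since you work directly with the canonical forms and type labels as defined in this paper. The only things I would ask you to flesh out before calling it a proof are the ``short character-sum argument'' for odd rank in odd characteristic (a two-line direct count over the values of the hyperbolic part suffices) and the base case $r=2$ of the even-rank induction; you should also note that the rank-$0$ case $N(Q)=q^m=q^{m-1}+q^{m-1}(q-1)$ is consistent with the stated formula for $\tau=1$.
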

\begin{proof}
The conclusion is a direct consequence of \cite[Theorems 6.26, 6.27, 6.32]{LN} and Table~\ref{tab-corres}.
\end{proof}

\section{The weight distribution of second order Reed-Muller codes and their relatives}\label{sec3}

In this section, we compute the weight distributions of second order Reed-Muller codes and their relatives. For this purpose, a brief introduction to Reed-Muller codes, homogeneous Reed-Muller codes and projective Reed-Muller codes is also included. For a more detailed treatment, see \cite[Section 5]{AK}, \cite[Chapters 13,14,15]{MS} for Reed-Muller codes, \cite{MDCE} for homogeneous Reed-Muller codes and \cite{La1,La2,Soren} for projective Reed-Muller codes. For the basic knowledge of coding theory, please refer to \cite{MS}. We only mention a few notations below.

Let $\cC$ be a code of length $n$. For ${\bf c} \in \cC$, $\wt({\bf c})$ is the Hamming weight of the codeword ${\bf c}$. For $0 \le i \le n$, we use $A_i:=A_i(\cC)$ to denote the number of codewords in $\cC$ with Hamming weight $i$. The sequence $(A_0,A_1,\ldots,A_n)$ is called the \emph{weight distribution} of $\cC$. The \emph{weight enumerator} of $\cC$ is a polynomial $\sum_{i=0}^n A_iZ^i$, which gives a compact expression of the weight distribution.

\subsection{Reed-Muller codes}

Let $q$ be a prime power. Let $\Fq^m=\{{\bf v}_i \mid 0 \le i \le q^m-1\}$ be an $m$-dimensional vector space over $\Fq$. We use $\cP_q(r,m)$ to denote the vector space of all polynomials over $\Fq$ with $m$ variables and degree at most $r$. Let $0 \le r \le m(q-1)$, the $r$-th order $q$-ary Reed-Muller code of length $q^m$ is defined as
$$
\{ (f({\bf v}_i))_{i=0}^{q^m-1} \mid f \in \cP_q(r,m) \},
$$
which is denoted by $\RM_q(r,m)$. In particular, when $r=2$, the second order $q$-ary Reed-Muller code $\RM_q(2,m)$ has the following parameters \cite[Theorem 5.4.1, Corollary 5.5.4]{AK}:
\begin{equation}\label{eqn-RMpara}
[n,k,d]=\begin{cases}
  %[2,2,1] & \mbox{if $q=2$, $m=1$,} \\
  [2^m,\frac{m^2+m+2}{2},2^{m-2}] & \mbox{if $q=2$, $m \ge 2$,} \\
  [q^m,\frac{m^2+3m+2}{2},(q-2)q^{m-1}] & \mbox{if $q>2$.}
\end{cases}
\end{equation}

We use ${\bf 0}$ and ${\bf 1}$ to denote the all-zero and all-one vector of length $q^m$ over $\Fq$. Let $\cQ$ be the set of all quadratic forms from $\Fq^m$ to $\Fq$. By definition, the second order Reed-Muller code $\RM_q(2,m)$ can be decomposed into a disjoint union of cosets:
\begin{equation}\label{eqn-RMdecom}
\RM_q(2,m)=\begin{cases}
\bigcup_{Q \in \cQ} \big((Q({\bf v}_i))_{i=0}^{q^m-1}+\{{\bf 0}, {\bf 1}\}\big) & \mbox{if $q=2$, $m \ge 2$} \\
\bigcup_{Q \in \cQ} \big((Q({\bf v}_i))_{i=0}^{q^m-1}+\RM_q(1,m)\big) & \mbox{if $q>2$.}
\end{cases}
\end{equation}
Note that when $q=2$, we have $x=x^2$ for each $x \in \Ft$. Thus, each linear function from $\Ft^m$ to $\Ft$ is actually a quadratic form. This explains the distinct decompositions in \eqref{eqn-RMdecom} for $q=2$ and $q>2$, as well as the different dimensions in \eqref{eqn-RMpara}.

Thus, to compute the weight distribution of $\RM_q(2,m)$, it suffices to compute the weight distribution of $(Q({\bf v}_i))_{i=0}^{q^m-1}+\{{\bf 0}, {\bf 1}\}$ or $(Q({\bf v}_i))_{i=0}^{q^m-1}+\RM_q(1,m)$, for each $Q \in \cQ$. The following proposition says the weight distribution of $(Q({\bf v}_i))_{i=0}^{q^m-1}+\RM_q(1,m)$ depends only on the rank and type of $Q$.

\begin{proposition}\label{prop-cosetweight}
Let $Q$ be a quadratic form from $\Fq^m$ to $\Fq$. Suppose ${\bf c}$ ranges over $(Q({\bf v}_i))_{i=0}^{q^m-1}+\RM_q(1,m)$. Then the following holds.
\begin{itemize}
\item[(1)] If $Q$ has odd rank $r$, then
$$
\wt({\bf c})=\left\{
\begin{aligned}
  &q^m-q^{m-1} & \mbox{$q^{m+1}-q^{r+1}+q^r$ times,} \\
  &q^m-q^{m-1}\pm q^{m-\frac{r+1}{2}} & \mbox{$\frac{q-1}{2}q^r$ times.}
\end{aligned}\right.
$$
\item[(2)] If $Q$ has even rank $r$ and type $\tau$, then
$$
\wt({\bf c})=\left\{
\begin{aligned}
  &q^m-q^{m-1} & \mbox{$q^{m+1}-q^{r+1}$ times,} \\
  &q^m-q^{m-1}-\tau q^{m-\frac{r+2}{2}}(q-1) & \mbox{$q^r$ times,} \\
  &q^m-q^{m-1}+\tau q^{m-\frac{r+2}{2}} & \mbox{$(q-1)q^r$ times.}
\end{aligned}\right.
$$
\end{itemize}
\end{proposition}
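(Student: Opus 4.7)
The plan is to derive this result directly from Proposition~\ref{prop-cosetzero} via the elementary identity between Hamming weight and zero-count. First I would parametrize the coset: every codeword of $(Q({\bf v}_i))_{i=0}^{q^m-1}+\RM_q(1,m)$ has the form $(f({\bf v}_i))_{i=0}^{q^m-1}$ where $f = Q+L+\ol{c}$ for some $L \in \cL$ and $c \in \Fq$, so that $f$ ranges over the multiset $\{Q+L+\ol{c} \mid L \in \cL, c \in \Fq\}$. Because ${\bf v}_0, \ldots, {\bf v}_{q^m-1}$ enumerate $\Fq^m$ without repetition, the Hamming weight of the codeword $(f({\bf v}_i))_{i=0}^{q^m-1}$ equals $q^m - N(f)$.

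Next I would check that the parametrization $(L,c) \mapsto Q+L+\ol{c}$ transfers the multiset of $N$-values from Proposition~\ref{prop-cosetzero} to the multiset of codeword weights without double-counting. Distinct $(L,c)$ pairs produce distinct affine functions $L+\ol{c}$, and evaluation of an affine function on all of $\Fq^m$ is injective, so distinct $(L,c)$ pairs yield distinct codewords. Hence the $q^{m+1}$ codewords of the coset are in weight-preserving bijection (up to the $N \leftrightarrow q^m-N$ swap) with the functions $f$ appearing in Proposition~\ref{prop-cosetzero}.

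Finally, I would apply Proposition~\ref{prop-cosetzero} and replace each entry $N$ in the described multiset by $q^m - N$. For odd rank $r$, this turns $q^{m-1}$ into $q^m - q^{m-1}$ and $q^{m-1} \pm q^{m-\frac{r+1}{2}}$ into $q^m - q^{m-1} \mp q^{m-\frac{r+1}{2}}$, which by the symmetry of the multiset frequency $\frac{q-1}{2}q^r$ across the two signs gives exactly the tabulation claimed. For even rank the analogous substitution produces the three stated weights with their indicated multiplicities. There is no real obstacle here: the substantive content sits in Proposition~\ref{prop-cosetzero}, and the present proof amounts to the observation $\wt = q^m - N$ together with the routine bijection check.
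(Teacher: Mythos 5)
Your proposal is correct and follows essentially the same route as the paper: identify the coset elements with functions $Q+L+\ol{c}$, use $\wt = q^m - N(f)$, and read off the multiset from Proposition~\ref{prop-cosetzero} (the sign flip for odd rank being absorbed by the symmetric frequencies). The injectivity check you add is a harmless elaboration of what the paper leaves implicit.
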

\begin{proof}
Recall that $\cL$ is the set of all linear functions from $\Fq^m$ to $\Fq$. By definition, the first order Reed-Muller code
$$
\RM_q(1,m)=\{ (g({\bf v}_i))_{i=0}^{q^m-1} \mid g \in \{ L+\ol{c} \mid L \in \cL, c \in \Fq \} \}.
$$
Thus, the weight distribution of $(Q({\bf v}_i))_{i=0}^{q^m-1}+\RM_q(1,m)$ is the multiset
$$
\{q^m-N(f) \mid f \in \{Q+L+\ol{c} \mid L \in \cL, c \in \Fq\} \}.
$$
Therefore, the conclusion follows from Proposition~\ref{prop-cosetzero}.
\end{proof}

Now we are ready to compute the weight distribution of second order Reed-Muller codes.

\begin{theorem}\label{thm-RM}
The weight distribution of the second order Reed-Muller code $\RM_q(2,m)$ is listed in Table~\ref{tab-RM1} for $q=2$ and in Table~\ref{tab-RM2} for $q>2$, where $v_{2j,1}$, $v_{2j,-1}$, $v_{2j+1}$ are defined in Proposition~\ref{prop-quanumber} and $v_{2j+1}=0$ if $2j+1>m$.
\end{theorem}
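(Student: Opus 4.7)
The plan is to exploit the coset decomposition \eqref{eqn-RMdecom}, which partitions $\RM_q(2,m)$ into cosets indexed by quadratic forms $Q \in \cQ$. Since the weight distribution of each coset is already controlled by earlier results (depending only on the rank and type of $Q$), and the number of quadratic forms with given rank/type is given by Proposition~\ref{prop-quanumber}, the weight distribution of $\RM_q(2,m)$ will be obtained by summing per-coset contributions weighted by the multiplicities $v_{2j,\tau}$ and $v_{2j+1}$.

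For $q>2$, I would apply Proposition~\ref{prop-cosetweight} directly: for each rank $r$ (and type $\tau$ when $r$ is even), I read off the multiset of weights appearing in a single coset $(Q({\bf v}_i))_{i=0}^{q^m-1}+\RM_q(1,m)$, then multiply each multiplicity by the corresponding count from Proposition~\ref{prop-quanumber} and sum over $0 \le r \le m$. The rank-$0$ case (only the zero quadratic form, $v_{0,1}=1$) contributes the weight distribution of $\RM_q(1,m)$ itself. Aggregating identical weights across all $(r,\tau)$ yields Table~\ref{tab-RM2}.

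For $q=2$, each coset is simply $\{{\bf c},\,{\bf c}+{\bf 1}\}$ with ${\bf c}=(Q({\bf v}_i))_{i=0}^{2^m-1}$, so $\wt({\bf c})=2^m-N(Q)$ and $\wt({\bf c}+{\bf 1})=N(Q)$. Proposition~\ref{prop-quadzero} gives $N(Q)=2^{m-1}$ for odd rank and $N(Q)=2^{m-1}+\tau\cdot 2^{m-(r+2)/2}$ for even rank $r$ and type $\tau$. Hence each odd-rank coset contributes two codewords of weight $2^{m-1}$, while each even-rank coset contributes one codeword of weight $2^{m-1}-\tau\cdot 2^{m-1-r/2}$ and one of weight $2^{m-1}+\tau\cdot 2^{m-1-r/2}$. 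Multiplying by the counts $v_r$ and $v_{r,\tau}$ from Proposition~\ref{prop-quanumber} and summing over $0 \le r \le m$ gives Table~\ref{tab-RM1}.

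The main obstacle is bookkeeping rather than mathematics: one must pool together the many contributions to the ``central'' weight $q^m-q^{m-1}$ that arise from every rank, keep the type information straight in the even-rank case, and ensure the $r=0$ coset (corresponding to codewords of $\RM_q(1,m)$ or $\{{\bf 0},{\bf 1}\}$) is included exactly once. Presenting the aggregated frequencies in the row-by-row format required by the tables is essentially a clerical exercise once the per-coset data from Propositions~\ref{prop-cosetweight} and \ref{prop-quadzero} are in hand.
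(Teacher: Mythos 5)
Your proposal is correct and follows exactly the paper's argument: the coset decomposition \eqref{eqn-RMdecom}, combined with Proposition~\ref{prop-quadzero} (for $q=2$) or Proposition~\ref{prop-cosetweight} (for $q>2$) for the per-coset weight distribution, and Proposition~\ref{prop-quanumber} for the multiplicities $v_{2j,\tau}$ and $v_{2j+1}$. The paper's own proof is just a two-sentence citation of these ingredients, so your more explicit bookkeeping (including the observation that for $q=2$ each even-rank coset of type $\tau$ and its counterpart of type $-\tau$ together produce the symmetric pair of weights $2^{m-1}\pm 2^{m-1-r/2}$) is a faithful elaboration of the intended argument.
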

\begin{proof}
When $q=2$, the conclusion follows from \eqref{eqn-RMdecom} and Propositions~\ref{prop-quanumber} and \ref{prop-quadzero}. When $q>2$, the conclusion follows from \eqref{eqn-RMdecom} and Propositions~\ref{prop-quanumber} and \ref{prop-cosetweight}.
\end{proof}

\begin{table}[h!]
\ra{2.25}
\begin{center}
\caption{Weight distribution of $\RM_2(2,m)$, $m \ge 2$}
\label{tab-RM1}
\vspace{4pt}
\begin{tabular}{|c|c|}
\hline
Weight & Frequency \\ \hline
$0$  &  $1$  \\ \hline
$2^{m-1}$  &  $2\big(2^m-1+\sum_{j=1}^{\lf (m-1)/2 \rf} 2^{j^2+j}\frac{\prod_{i=m-2j}^m (2^i-1)}{\prod_{i=1}^j(2^{2i}-1)}\big)$  \\ \hline
$2^{m-1}\pm 2^{m-j-1}, 1 \le j \le \lf \frac{m}{2} \rf$  &  $2^{j^2+j}\frac{\prod_{i=m-2j+1}^m (2^i-1)}{\prod_{i=1}^j (2^{2i}-1)}$ \\ \hline
%$2^{m-1}\pm 2^{m-j-1}$,  &  \multirow{2}{*}{$2^{j^2+j}\frac{\prod_{i=m-2j+1}^m (2^i-1)}{\prod_{i=1}^j (2^{2i}-1)}$}  \\
%$1 \le j \le \lf \frac{m}{2} \rf$ & \\ \hline
$2^m$  &  $1$  \\ \hline
\end{tabular}
\end{center}
\end{table}

\begin{table}[h!]
\ra{1.5}
\begin{center}
\caption{Weight distribution of $\RM_q(2,m)$, $q>2$}
\label{tab-RM2}
\vspace{4pt}
\begin{tabular}{|c|c|}
\hline
Weight & Frequency \\ \hline
$0$  &  $1$  \\ \hline
$q^m-q^{m-1}-\tau q^{m-j-1}(q-1)$,   &  \multirow{2}{*}{$\frac{q^{j^2+2j}(q^{j}+\tau)}{2}\frac{\prod_{i=m-2j+1}^m (q^i-1)}{\prod_{i=1}^j(q^{2i}-1)}$}  \\
$1 \le j \le \lf \frac{m}{2} \rf$, $\tau \in \{1,-1\}$ & \\ \hline
$q^m-q^{m-1}+\tau q^{m-j-1}$,   &  \multirow{2}{*}{$(q-1)q^{2j} v_{2j,\tau}+\frac{q-1}{2}q^{2j+1}v_{2j+1}$}  \\
$1 \le j \le \lf \frac{m}{2} \rf$, $\tau \in \{1,-1\}$ & \\ \hline
$q^{m}- 2q^{m-1}$ &  $\frac{q(q-1)(q^m-1)}{2}$  \\ \hline
\multirow{2}{*}{$q^m-q^{m-1}$} & $q^{\frac{m^2+3m+2}{2}}-\sum_{j=1}^{\lf m/2 \rf} q^{2j+1} (v_{2j,1}+v_{2j,-1})$ \\
& $-\sum_{j=0}^{\lf m/2 \rf}(q-1)q^{2j+1} v_{2j+1}-q$ \\ \hline
$q^m$  &  $\frac{q(q-1)(q^m-1)}{2}+q-1$  \\ \hline
\end{tabular}
\end{center}
\end{table}

\begin{remark}
Table~\ref{tab-RM1} reproduces the results in \cite{SB} (see also \cite[Chapter 15, Theorem 8]{MS}). Still, we do not have a compact formula on the frequency of codewords with weight $2^{m-1}$. Similarly, in Table~\ref{tab-RM2}, no compact formula on the frequency of codewords with weight $q^m-q^{m-1}$ is available.
\end{remark}

\begin{example}
Numerical experiment shows that the second order binary Reed-Muller code $\RM_2(2,7)$ has weight enumerator
\begin{align*}
1&+10668Z^{32}+5291328Z^{48}+112881664Z^{56}+300503590Z^{64}+\\
 &112881664Z^{72}+5291328Z^{80}+10668Z^{96}+Z^{128},
\end{align*}
which is consistent with Table~\ref{tab-RM1}.
\end{example}

\begin{example}
Numerical experiment shows that the second order ternary Reed-Muller code $\RM_3(2,4)$ has weight enumerator
\begin{align*}
1&+240Z^{27}+14040Z^{36}+519480Z^{45}+1705860Z^{48}+2729376Z^{51}+4062720Z^{54}+\\
 &3411720Z^{57}+1364688Z^{60}+533520Z^{63}+7020Z^{72}+242Z^{81},
\end{align*}
which is consistent with Table~\ref{tab-RM2}.
\end{example}

\subsection{Homogeneous Reed-Muller codes}

Some variations of Reed-Muller codes were discussed in the literature. As an attempt to find subcodes of Reed-Muller codes with large minimum distances, the concept of homogeneous Reed-Muller codes was proposed \cite{McE,MDCE}. Let $\Fq^m=\{{\bf v}_i \mid 0 \le i \le q^m-1\}$ be an $m$-dimensional vector space over $\Fq$. We use $\cP_q^h(r,m)$ to denote the vector space of all homogeneous polynomials over $\Fq$ with $m$ variables and degree $r$. Let $0 \le r \le m(q-1)$, the $r$-th order $q$-ary homogeneous Reed-Muller code of length $q^m$ is defined as
$$
\{ (f({\bf v}_i))_{i=0}^{q^m-1} \mid f \in \cP_q^h(r,m) \},
$$
which is denoted by $\HRM_q(r,m)$. Without loss of generality, suppose that ${\bf v}_0$ is the zero vector in $\Fq^m$. Then by definition, the first coordinate of $\HRM_q(r,m)$ is always $0$. Thus, in some literature, the homogeneous Reed-Muller code is defined to be the punctured code of $\HRM_q(r,m)$ with the first coordinate deleted (see \cite{Ber,McE} for instance). When $r=2$, the second order $q$-ary homogeneous Reed-Muller code $\HRM_q(2,m)$ has the following parameters \cite[Proposition 4]{Ber}:
$$
[n,k,d]=\begin{cases}
          [q,1,q-1] & \mbox{if $q>2$, $m=1$,} \\
          [q^m,\frac{m(m+1)}{2},(q-1)^2q^{m-2}] & \mbox{otherwise.}
        \end{cases}
$$
Note that
$$
\HRM_q(2,m)=\{ (Q({\bf v}_i))_{i=0}^{q^m-1} \mid Q \in \cQ \},
$$
where $\cQ$ is the set of all quadratic forms from $\Fq^m$ to $\Fq$. The weight distribution of $\HRM_q(2,m)$ follows easily from Propositions~\ref{prop-quanumber} and \ref{prop-quadzero}.

\begin{theorem}\label{thm-HRM}
The weight distribution of the second order homogeneous Reed-Muller code $\HRM_q(2,m)$ is listed in Table~\ref{tab-homoRM}.
\end{theorem}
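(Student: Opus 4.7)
The plan is to exploit the identification
$$
\HRM_q(2,m)=\{(Q(\mathbf{v}_i))_{i=0}^{q^m-1}\mid Q\in\cQ\}
$$
so that every codeword is indexed by a quadratic form, and its Hamming weight equals $q^m-N(Q)$. Thus the weight distribution is obtained by partitioning $\cQ$ according to the pair (rank, type), reading off the corresponding weight via Proposition~\ref{prop-quadzero}, and counting the size of each class via Proposition~\ref{prop-quanumber}.

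More concretely, I would proceed in the following order. First, observe that the map $Q\mapsto (Q(\mathbf{v}_i))_{i=0}^{q^m-1}$ is an $\Fq$-linear bijection from $\cQ$ onto $\HRM_q(2,m)$ (injectivity follows from the fact that $\RM_q(2,m)$ is evaluation of polynomials of degree $\le 2$, and two distinct homogeneous quadratic polynomials cannot define the same evaluation vector once $m\ge 2$; when $m=1$ and $q>2$, the code reduces to the repetition code and the computation is trivial, so this case can be handled separately). Second, for a nonzero $Q$, compute $\wt((Q(\mathbf{v}_i))_{i=0}^{q^m-1})=q^m-N(Q)$ by Proposition~\ref{prop-quadzero}: if $Q$ has odd rank, the weight is $q^m-q^{m-1}$ regardless of the rank or (when $q$ is odd) the type; if $Q$ has even rank $r$ and type $\tau\in\{1,-1\}$, the weight is $q^m-q^{m-1}-\tau q^{m-(r+2)/2}(q-1)$.

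Third, collect the frequencies. For the weight $q^m-q^{m-1}$, the count is
$$
\sum_{j=0}^{\lfloor(m-1)/2\rfloor} v_{2j+1},
$$
since this weight arises precisely from all quadratic forms of odd rank, and the odd-rank forms are enumerated by $v_{2j+1}$ in Proposition~\ref{prop-quanumber} (independent of any type distinction, which is consistent with both Proposition~\ref{prop-canonicaleven} in the even-$q$ case and with the observation that the two odd-rank types produce the same value of $N(Q)$ in the odd-$q$ case). For the weight $q^m-q^{m-1}-\tau q^{m-j-1}(q-1)$ with $1\le j\le\lfloor m/2\rfloor$ and $\tau\in\{1,-1\}$, the frequency is exactly $v_{2j,\tau}$. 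Finally, the zero codeword contributes weight $0$ with frequency $1=v_{0,1}$.

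There is essentially no analytic obstacle here: every ingredient is already packaged in Propositions~\ref{prop-quanumber} and \ref{prop-quadzero}. The only bookkeeping point to be careful about is the odd-rank aggregation, namely that both characteristics produce a single weight value $q^m-q^{m-1}$ for all odd-rank forms, so the frequencies $v_{2j+1}$ for different $j$ all get summed into the same row of Table~\ref{tab-homoRM}; this is the step most likely to be miscounted, so I would double-check it by verifying that the total frequency sums to $q^{\binom{m+1}{2}}=|\cQ|$, which reduces to the classical identity behind Proposition~\ref{prop-quanumber}.
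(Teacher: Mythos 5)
Your proposal is correct and follows exactly the paper's route: the paper likewise identifies $\HRM_q(2,m)$ with $\{(Q({\bf v}_i))_{i=0}^{q^m-1}\mid Q\in\cQ\}$ and reads the weights off Proposition~\ref{prop-quadzero} and the frequencies off Proposition~\ref{prop-quanumber}, with the odd-rank classes all merging into the single weight $q^m-q^{m-1}$. Your extra bookkeeping (injectivity of the evaluation map, the $m=1$ case, and the total-count sanity check) only makes explicit what the paper leaves implicit.
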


\begin{remark}
The weight distribution of $\HRM_q(2,m)$ has essentially been obtained in \cite[Table 6]{McE}, in which the punctured code of $\HRM_q(2,m)$ was considered. Thus, the above theorem is not new. We list the weight distribution in Table~\ref{tab-homoRM}, where a few typos in \cite[Table 6]{McE} are corrected. %Moreover, Table~\ref{tab-homoRM} will be used in the subsequent computation involving second order projective Reed-Muller codes.
\end{remark}

\begin{table}
\ra{2.2}
\begin{center}
\caption{Weight distribution of $\HRM_q(2,m)$}
\label{tab-homoRM}
\vspace{4pt}
\begin{tabular}{|c|c|}
\hline
Weight & Frequency \\ \hline
$0$  &  $1$  \\ \hline
$q^m-q^{m-1}$  &  $q^m-1+\sum_{j=1}^{\lf (m-1)/2 \rf} q^{j^2+j}\frac{\prod_{i=m-2j}^m (q^i-1)}{\prod_{i=1}^j(q^{2i}-1)}$  \\ \hline
$q^{m}-q^{m-1}-\tau q^{m-j-1}(q-1)$,  &  \multirow{2}{*}{$\frac{q^{j^2}(q^{j}+\tau)}{2}\frac{\prod_{i=m-2j+1}^m (q^i-1)}{\prod_{i=1}^j(q^{2i}-1)}$}  \\
$1 \le j \le \lf \frac{m}{2} \rf$, $\tau \in \{1,-1\}$ & \\ \hline
\end{tabular}
\end{center}
\end{table}

\begin{example}
Numerical experiment shows that the second order ternary homogeneous Reed-Muller code $\HRM_3(2,4)$ has weight enumerator
$$
1+1560Z^{36}+21060Z^{48}+18800Z^{54}+16848Z^{60}+780Z^{72},
$$
which is consistent with Table~\ref{tab-homoRM}.
\end{example}

\subsection{Projective Reed-Muller codes}

Another variation of Reed-Muller codes adopts a geometric viewpoint, in which the Reed-Muller codes are regarded as codes defined on an affine space. As a natural projective analogue of Reed-Muller codes, the concept of projective Reed-Muller codes was proposed by Lachaud \cite{La1,La2}. Consider an $(m+1)$-dimensional vector space $\Fq^{m+1}=\{{\bf w}_i \mid 0 \le i \le q^{m+1}-1\}$ over $\Fq$, where ${\bf w_0}$ is the zero vector. We introduce an equivalence relation among nonzero elements of $\Fq^{m+1}$ as follows. For nonzero ${\bf x}=(x_0,x_1,\ldots,x_m)$ and ${\bf y}=(y_0,y_1,\ldots,y_m)$ in $\Fq^{m+1}$, we define ${\bf x}\sim{\bf y}$ if and only if there exists a nonzero $\la \in \Fq$, such that $(x_0,x_1,\ldots,x_m)=(\la y_0, \la y_1,\ldots,\la y_m)$. It is easy to see that this relation is indeed an equivalence relation partitioning all nonzero elements of $\Fq^{m+1}$. Without loss of generality, for each $1 \le i,j \le \frac{q^{m+1}-1}{q-1}$, $i \ne j$, assume that ${\bf w}_i$ and ${\bf w}_j$ belong to distinct equivalence classes. Thus, $\{{\bf w}_i \mid 1 \le i \le \frac{q^{m+1}-1}{q-1} \}$ is a set of representatives of the equivalence classes in $\Fq^{m+1}$. Let $0 \le r \le (m+1)(q-1)$, the $r$-th order $q$-ary projective Reed-Muller code of length $\frac{q^{m+1}-1}{q-1}$ is defined as
$$
\{ (f({\bf w}_i))_{i=1}^{(q^{m+1}-1)/(q-1)} \mid f \in \cP_q^h(r,m+1) \},
$$
which is denoted by $\PRM_q(r,m)$. When $q=2$, we know that $\PRM_2(r,m)$ is identical to the punctured code of $\HRM_2(r,m+1)$, in which the coordinate associated with the zero vector is deleted. When $r=2$, the second order $q$-ary projective Reed-Muller code $\PRM_q(2,m)$ has the following parameters \cite[Theorem 1]{Soren}:
$$
[n,k,d]=[\frac{q^{m+1}-1}{q-1},\frac{(m+1)(m+2)}{2},(q-1)q^{m-1}].
$$

Note that there is a one-to-one correspondence between codewords of $\PRM_q(2,m)$ and codewords of $\HRM_q(2,m+1)$:
\begin{align*}
\PRM_q(2,m) & \longleftrightarrow \HRM_q(2,m+1) \\
(Q({\bf w}_i))_{i=1}^{(q^{m+1}-1)/(q-1)} &\longleftrightarrow (Q({\bf w}_i))_{i=0}^{q^{m+1}-1},
\end{align*}
where $Q \in \cQ$ and $\cQ$ is the set of all quadratic forms from $\Fq^{m+1}$ to $\Fq$. By the property of quadratic forms, we have $\wt((Q({\bf w}_i))_{i=1}^{(q^{m+1}-1)/(q-1)})=\wt((Q({\bf w}_i))_{i=0}^{q^{m+1}-1})/(q-1)$ for each $Q \in \cQ$. Consequently, the weight distribution of $\PRM_q(2,m)$ follows easily from Theorem~\ref{thm-HRM}.

\begin{theorem}\label{thm-PRM}
The weight distribution of the second order projective Reed-Muller code $\PRM_q(2,m)$ is listed in Table~\ref{tab-proRM}.
\end{theorem}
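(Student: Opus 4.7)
The plan is to piggy-back on Theorem~\ref{thm-HRM} via the one-to-one correspondence between quadratic forms $Q \in \cQ$ on $\Fq^{m+1}$ and the codewords of $\PRM_q(2,m)$, exactly as set up in the paragraph preceding the statement. The key point is that the weight of the projective codeword associated with $Q$ is a constant multiple of the weight of the corresponding homogeneous codeword, so the weight distribution of $\PRM_q(2,m)$ is obtained from that of $\HRM_q(2,m+1)$ by a simple rescaling of the weights while leaving the frequencies untouched.

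First, I would establish the weight ratio
\[
\wt\bigl((Q({\bf w}_i))_{i=1}^{(q^{m+1}-1)/(q-1)}\bigr) \;=\; \frac{1}{q-1}\,\wt\bigl((Q({\bf w}_i))_{i=0}^{q^{m+1}-1}\bigr)
\]
for every $Q \in \cQ$. This uses two observations: (i) $Q({\bf 0})=0$, so the coordinate indexed by ${\bf w}_0$ contributes nothing to the weight on the right-hand side; and (ii) for any nonzero ${\bf x}\in\Fq^{m+1}$ and any $\la \in \Fq^\times$, the homogeneity relation $Q(\la{\bf x})=\la^2 Q({\bf x})$ forces $Q(\la{\bf x})=0 \iff Q({\bf x})=0$. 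Consequently, the set of nonzero coordinates of $(Q({\bf w}_i))_{i=0}^{q^{m+1}-1}$ is a disjoint union of $\Fq^\times$-orbits each of size $q-1$, and each such orbit corresponds to exactly one projective representative ${\bf w}_i$ with $Q({\bf w}_i)\ne 0$.

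Next, I would invoke Theorem~\ref{thm-HRM} with $m+1$ in place of $m$ to read off the weights and frequencies of $\HRM_q(2,m+1)$. Since the map $Q \mapsto (Q({\bf w}_i))_{i=1}^{(q^{m+1}-1)/(q-1)}$ is a bijection between $\cQ$ and $\PRM_q(2,m)$ (two quadratic forms agreeing on a full set of projective representatives agree on all nonzero vectors by homogeneity, and they agree at ${\bf 0}$ automatically), every frequency count in Table~\ref{tab-homoRM} transfers verbatim. Dividing the weights $q^{m+1}-q^m$ and $q^{m+1}-q^m-\tau q^{m-j}(q-1)$ by $q-1$ then yields the projective weights $q^m$ and $q^m-\tau q^{m-j}$, which assemble into Table~\ref{tab-proRM}.

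There is essentially no obstacle beyond verifying the weight ratio; the computation is a routine substitution once the correspondence is in place. One should only take care that the case $q=2$ is covered uniformly, which it is, since then $q-1=1$ and the ratio identity degenerates to the statement that puncturing the single zero coordinate preserves the weight, in agreement with the remark in the paper that $\PRM_2(2,m)$ is the punctured $\HRM_2(2,m+1)$.
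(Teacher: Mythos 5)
Your proposal is correct and follows essentially the same route as the paper: the paper also derives Theorem~\ref{thm-PRM} directly from Theorem~\ref{thm-HRM} via the bijection $Q \mapsto (Q({\bf w}_i))_{i=1}^{(q^{m+1}-1)/(q-1)}$ and the weight ratio $\wt((Q({\bf w}_i))_{i=1}^{(q^{m+1}-1)/(q-1)})=\wt((Q({\bf w}_i))_{i=0}^{q^{m+1}-1})/(q-1)$, which it attributes to the homogeneity property of quadratic forms. Your write-up merely makes explicit the two facts the paper leaves implicit (that $Q(\la{\bf x})=\la^2Q({\bf x})$ partitions the nonzero support into $\Fq^\times$-orbits of size $q-1$, and that the evaluation map is injective so frequencies transfer), which is a welcome but not substantively different elaboration.
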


\begin{table}
\ra{2.2}
\begin{center}
\caption{Weight distribution of $\PRM_q(2,m)$}
\label{tab-proRM}
\vspace{4pt}
\begin{tabular}{|c|c|}
\hline
Weight & Frequency \\ \hline
$0$  &  $1$  \\ \hline
$q^{m}$  &  $q^{m+1}-1+\sum_{j=1}^{\lf m/2 \rf} q^{j^2+j}\frac{\prod_{i=m-2j+1}^{m+1} (q^i-1)}{\prod_{i=1}^j(q^{2i}-1)}$  \\ \hline
$q^{m}-\tau q^{m-j}$,  &  \multirow{2}{*}{$\frac{q^{j^2}(q^{j}+\tau)}{2}\frac{\prod_{i=m-2j+2}^{m+1} (q^i-1)}{\prod_{i=1}^j(q^{2i}-1)}$}  \\
$1 \le j \le \lf \frac{m+1}{2} \rf$, $\tau \in \{1,-1\}$ & \\ \hline
\end{tabular}
\end{center}
\end{table}

\begin{example}
Numerical experiment shows that the second order ternary projective Reed-Muller code $\PRM_3(2,4)$ has weight enumerator
$$
1+14520Z^{54}+2548260Z^{72}+9740258Z^{81}+2038608Z^{90}+7260Z^{108},
$$
which is consistent with Table~\ref{tab-proRM}.
\end{example}

\section{Conclusion}\label{sec4}

In this paper, we compute the weight distributions of second order $q$-ary Reed-Muller code, second order $q$-ary homogeneous Reed-Muller codes and second order $q$-ary projective Reed-Muller codes. Besides, in Propositions~\ref{prop-diseven} and \ref{prop-disodd}, for each $c \in \Fq$, the multiset
$$
\{ N(f) \mid f \in \{Q+L+\ol{c} \mid L \in \cL\} \}
$$
has been determined, where $Q$ is a quadratic form from $\Fq^m$ to $\Fq$ and $\cL$ is the set of all linear functions from $\Fq^m$ to $\Fq$. In this sense, the results in Propositions~\ref{prop-diseven} and \ref{prop-disodd} are of independent interest.

\section*{Acknowledgement}

Shuxing Li is supported by the Alexander von Humboldt Foundation. He is indebted to Professor Cunsheng Ding for pointing out the mistakes in McEliece's paper and many helpful suggestions. He wishes to thank Professor Alexander Pott for his careful reading and very helpful comments.

\section*{Appendix}

In this appendix, we show that the correspondence in Table~\ref{tab-corres} holds true. Note that when $q$ is even, the terminologies in Definition~\ref{def-canonquad} and \cite[Proposition 3.8]{Li} simply have different names and are essentially the same. Hence, it suffices to consider the $q$ odd case. The following is a preparatory lemma.

\begin{lemma}\label{lem-prep}
Let $q$ be an odd prime power and $\la$ a nonsquare of $\Fq$. Then the quadratic form $x_1^2+x_2^2$ is equivalent to $\la(y_1^2+y_2^2)$.
\end{lemma}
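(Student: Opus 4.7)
The plan is to exhibit an explicit invertible $2\times 2$ matrix $A$ over $\Fq$ realizing the equivalence. Setting $(x_1,x_2)^T = A(y_1,y_2)^T$ with $A=\begin{pmatrix} a & b \\ c & d \end{pmatrix}$, the equation $x_1^2+x_2^2 = \la(y_1^2+y_2^2)$ expands to the three conditions
$$
a^2+c^2=\la, \qquad b^2+d^2=\la, \qquad ab+cd=0.
$$
So the task reduces to finding $(a,c)\in\Fq^2$ with $a^2+c^2=\la$, and then packaging the solution into an invertible matrix.

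First I would establish that the quadratic form $x^2+y^2$ represents $\la$ over $\Fq$. This is standard: the set $S_1=\{a^2 : a\in\Fq\}$ has $(q+1)/2$ elements, and so does the set $S_2=\{\la-c^2 : c\in\Fq\}$. Since $|S_1|+|S_2|=q+1>q$, the two subsets of $\Fq$ intersect, giving a pair $(a,c)$ with $a^2+c^2=\la$. (Note $a,c$ cannot both be zero since $\la\neq 0$.)

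Given such $(a,c)$, I would then take the ``rotation-type'' matrix
$$
A=\begin{pmatrix} a & -c \\ c & \phantom{-}a \end{pmatrix},
$$
whose determinant is $a^2+c^2=\la\neq 0$, so $A$ is invertible. A direct check gives $(b,d)=(-c,a)$, hence $b^2+d^2=c^2+a^2=\la$ and $ab+cd=-ac+ca=0$, so all three conditions above are satisfied. Expanding $(ay_1-cy_2)^2+(cy_1+ay_2)^2$ yields exactly $\la(y_1^2+y_2^2)$, establishing the equivalence.

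The only nontrivial point is the representability claim $a^2+c^2=\la$, which I expect to be the main (and really the only) obstacle; everything else is a one-line verification. Note that the argument is insensitive to whether $-1$ is a square in $\Fq$, so it works uniformly for all odd prime powers $q$.
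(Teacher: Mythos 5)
Your proof is correct and follows essentially the same route as the paper's: find $(a,c)$ with $a^2+c^2=\la$ and apply the rotation-type substitution, whose determinant $a^2+c^2=\la\neq 0$ guarantees invertibility. The only difference is that you spell out the standard counting argument for representability, which the paper dismisses as ``easy to see.''
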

\begin{proof}
It is easy to see that there exist $\la_1, \la_2 \in \Fq$, such that $\la_1 \ne \la_2$ and $\la_1^2+\la_2^2=\la$. The conclusion follows by applying an invertible linear transformation satisfying $x_1=\la_1y_1+\la_2y_2$ and $x_2=-\la_2y_1+\la_1y_2$.
\end{proof}

When $q$ is odd, we use $\eta$ to denote the quadratic multiplicative character of $\Fq$. We have the following lemma.

\begin{lemma}\label{lem-equiv}
Let $q$ be an odd prime power. For ${\bf x}=(x_1,x_2,\ldots,x_m)$, let $Q({\bf x})=\sum_{i=1}^r a_ix_i^2$ be a quadratic form of rank $r$ from $\Fq^m$ to $\Fq$, where each $a_i$ is a nonzero element of $\Fq$ and $\eta(\prod_{i=1}^r a_i)=\de$. Then $Q$ is a quadratic form of rank $r$ and type $\tau$, where
$$
\tau=\begin{cases}
  -\de & \mbox{if $q \equiv 3 \bmod4$ and $r \equiv 2,3 \bmod4$,} \\
  \de & \mbox{otherwise.}
\end{cases}
$$
\end{lemma}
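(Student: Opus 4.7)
The plan is to reduce the general diagonal quadratic form $Q = \sum_{i=1}^r a_i x_i^2$ to one of the canonical forms of Proposition~\ref{prop-canonicalodd} by tracking a single invariant, the discriminant character $\de = \eta(\prod_{i=1}^r a_i)$, and then to compare $\de$ with the discriminant computed from each canonical form.

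First, I would show that among all diagonal forms of rank $r$ the equivalence class is determined by $\de$. Via the scaling $x_i \mapsto s_i x_i$ one can multiply any $a_i$ by a nonzero square, so one may assume each $a_i$ lies in $\{1,\la\}$ for a fixed nonsquare $\la \in \Fq$. Lemma~\ref{lem-prep} applied to any two coordinates with coefficient $\la$ replaces $\la x_i^2 + \la x_j^2$ by $y_i^2 + y_j^2$, so $\la$-entries can be eliminated two at a time. The result is equivalent to $\sum_{i=1}^{r-1} x_i^2 + x_r^2$ when $\de = 1$ and to $\sum_{i=1}^{r-1} x_i^2 + \la x_r^2$ when $\de = -1$; hence the equivalence class of $Q$ among rank-$r$ diagonal forms is pinned down by $\de$.

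Next I would compute $\de$ for each canonical form of Proposition~\ref{prop-canonicalodd} after diagonalizing every hyperbolic plane via $x_{2i-1} = u_i + v_i$, $x_{2i} = u_i - v_i$, which turns $x_{2i-1}x_{2i}$ into $u_i^2 - v_i^2$. A direct count of the resulting diagonal coefficients gives the product $(-1)^{(r-1)/2}$ in the odd-rank type-$1$ case, $(-1)^{(r-1)/2}\la$ in the odd-rank type-$(-1)$ case, $(-1)^{r/2-1}\cdot(-1) = (-1)^{r/2}$ in the even-rank type-$1$ case (the extra $-1$ coming from $-x_r^2$), and $(-1)^{r/2}\la$ in the even-rank type-$(-1)$ case. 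In all four cases this can be written uniformly as $\de = \tau \cdot \eta(-1)^{\lfloor r/2 \rfloor}$, so, combining with the first step, any $Q$ with rank $r$ and discriminant $\de$ is equivalent to the canonical form of rank $r$ and type $\tau = \de \cdot \eta(-1)^{\lfloor r/2 \rfloor}$.

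Finally I would substitute $\eta(-1) = (-1)^{(q-1)/2}$. If $q \equiv 1 \bmod 4$ then $\eta(-1) = 1$ and $\tau = \de$. If $q \equiv 3 \bmod 4$ then $\eta(-1) = -1$ and $\tau = (-1)^{\lfloor r/2 \rfloor}\de$, which equals $-\de$ exactly when $\lfloor r/2 \rfloor$ is odd, that is when $r \equiv 2,3 \bmod 4$, matching the claimed formula. The one step that needs genuine care is the sign bookkeeping in the diagonalization of the canonical forms, in particular the extra factor of $-1$ from the $-x_r^2$ term in the even-rank case; everything else reduces to a routine case analysis on $r \bmod 4$ and $q \bmod 4$.
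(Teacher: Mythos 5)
Your proof is correct, and while it shares its first step with the paper's argument, the second half proceeds differently. Like the paper, you begin by using Lemma~\ref{lem-prep} (together with scaling each variable, which adjusts each $a_i$ within its square class) to reduce $Q$ to one of the two reference diagonal forms $\sum_{i=1}^{r} y_i^2$ or $\sum_{i=1}^{r-1} y_i^2+\la y_r^2$ according to $\de$; this is exactly the paper's display \eqref{eqn-iniequiv}. From there the paper works \emph{forward}: it explicitly converts these reference forms into the canonical forms of Proposition~\ref{prop-canonicalodd} by pairing $+1$ and $-1$ squares into hyperbolic planes, which forces a case analysis on $q \bmod 4$ and $r \bmod 4$ (and the even-rank case is only asserted to be analogous). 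You instead work \emph{backward}: you diagonalize each canonical form, read off its discriminant character, and obtain the uniform relation $\de=\tau\,\eta(-1)^{\lf r/2 \rf}$, which you then invert and evaluate via $\eta(-1)=(-1)^{(q-1)/2}$. This absorbs all the casework into one exponent computation, treats odd and even rank on the same footing, and incidentally recovers the correspondence recorded in Table~\ref{tab-corres}. The only point worth making explicit is that your conclusion uses the fact that the two canonical forms of a given rank are inequivalent (so that matching discriminants pins down the type); this is supplied by the classification in Proposition~\ref{prop-canonicalodd}, so there is no gap.
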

\begin{proof}
We only prove the case where the rank $r$ is odd, since the proof of $r$ even case is analogous. Applying Lemma~\ref{lem-prep}, we can see that $Q$ is equivalent to one of the following:
\begin{equation}\label{eqn-iniequiv}
\begin{cases}
  \sum_{i=1}^r y_i^2 & \mbox{if $\de=1$,} \\
  \sum_{i=1}^{r-1} y_i^2+\la y_r^2 & \mbox{if $\de=-1$,}
\end{cases}
\end{equation}
where $\la$ is a nonsquare of $\Fq$. Next, we divide the proof in two cases, in which $q \equiv 1 \bmod4$ and $q \equiv 3 \bmod4$, respectively.

{\bf Case I: $q \equiv 1 \bmod 4$.} In this case, $-1$ is a square of $\Fq$. When $\de=1$, by \eqref{eqn-iniequiv}, $Q$ is equivalent to
$$
\sum_{i=1}^{\frac{r-1}{2}} z_i^2-\sum_{i=\frac{r+1}{2}}^{r-1} z_i^2+z_r^2.
$$
Let $w_i=z_i+z_{i+\frac{r-1}{2}}$, $w_{i+\frac{r-1}{2}}=z_i-z_{i+\frac{r-1}{2}}$ for $1 \le i \le \frac{r-1}{2}$ and $w_r=z_r$. We have that $Q$ is equivalent to
$$
\sum_{i=1}^{\frac{r-1}{2}} w_iw_{i+\frac{r-1}{2}}+w_r^2.
$$
Thus, $Q$ has rank $r$ and type $1$. When $\de=-1$, a similar argument shows that $Q$ has rank $r$ and type $-1$.

{\bf Case II: $q \equiv 3 \bmod 4$.} In this case, $-1$ is a nonsquare of $\Fq$. We consider two subcases where $r \equiv 1 \bmod 4$ and $r \equiv 3 \bmod 4$, respectively.

When $r \equiv 1 \bmod 4$, if $\de=1$, then by \eqref{eqn-iniequiv} and Lemma~\ref{lem-prep}, $Q$ is equivalent to
$$
\sum_{i=1}^{\frac{r-1}{2}} z_i^2-\sum_{i=\frac{r+1}{2}}^{r-1} z_i^2+z_r^2.
$$
An analogous approach as in Case I shows that $Q$ has rank $r$ and type $1$. Similarly, if $\de=-1$, we can show that $Q$ has rank $r$ and type $-1$.

When $r \equiv 3 \bmod 4$, the situation is more involved. If $\de=1$, then by \eqref{eqn-iniequiv} and Lemma~\ref{lem-prep}, $Q$ is equivalent to
$$
\sum_{i=1}^{\frac{r-3}{2}} z_i^2-\sum_{i=\frac{r-1}{2}}^{r-3} z_i^2+z_{r-2}^2-z_{r-1}^2-z_r^2.
$$
Let $w_i=z_i+z_{i+\frac{r-3}{2}}$, $w_{i+\frac{r-3}{2}}=z_i-z_{i+\frac{r-3}{2}}$ for $1 \le i \le \frac{r-3}{2}$ and $w_{r-2}=z_{r-2}+z_{r-1}$, $w_{r-1}=z_{r-2}-z_{r-1}$, $w_r=z_r$. Then we can see that $Q$ is equivalent to
$$
\sum_{i=1}^{\frac{r-3}{2}} w_iw_{i+\frac{r-3}{2}}+w_{r-2}w_{r-1}+(-1)w_r^2.
$$
Thus, $Q$ has rank $r$ and type $-1$. If $\de=-1$, then by \eqref{eqn-iniequiv} and Lemma~\ref{lem-prep}, $Q$ is equivalent to
$$
\sum_{i=1}^{\frac{r-3}{2}} z_i^2-\sum_{i=\frac{r-1}{2}}^{r-3} z_i^2+z_{r-2}^2+z_{r-1}^2-z_r^2.
$$
Let $w_i=z_i+z_{i+\frac{r-3}{2}}$, $w_{i+\frac{r-3}{2}}=z_i-z_{i+\frac{r-3}{2}}$ for $1 \le i \le \frac{r-3}{2}$ and $w_{r-2}=z_{r-1}+z_{r}$, $w_{r-1}=z_{r-1}-z_{r}$, $w_r=z_{r-2}$. Then we can see that $Q$ is equivalent to
$$
\sum_{i=1}^{\frac{r-3}{2}} w_iw_{i+\frac{r-3}{2}}+w_{r-2}w_{r-1}+w_r^2.
$$
Thus, $Q$ has rank $r$ and type $1$. Consequently, we complete the proof.
\end{proof}

Employing Lemma~\ref{lem-equiv} and comparing Definition~\ref{def-canonquad} with \cite[Proposition 3.8]{Li}, we confirm the correspondence in Table~\ref{tab-corres}.

\end{document}